\documentclass{amsart}
\usepackage{amsmath,latexsym, graphicx}
\usepackage[psamsfonts]{amssymb}
\usepackage{times}
\usepackage[mathcal]{euscript}
\usepackage{color}
\usepackage{amsfonts}
\usepackage{amssymb}
\usepackage{amsmath}
\usepackage{amsthm}
\usepackage{bm}
\usepackage[english]{babel}
\usepackage[bookmarks]{hyperref}
\numberwithin{equation}{section} \textwidth=140mm \textheight=200mm
\parindent=8mm

\frenchspacing


\renewcommand{\epsilon}{\varepsilon}


\renewcommand{\epsilon}{\varepsilon}

\renewcommand{\hat}{\widehat }


\newcommand{\black}{\color{black}}


\newcommand{\be}{\begin{equation}}
\newcommand{\ee}{\end{equation}}



\newcommand{\R}{\mathbb{R}}

\newcommand{\T}{\mathbb{T}}








\renewcommand{\det}{\mathop{\mathrm{det}}}

{\bf}{\it}
\newtheorem{theorem}{Theorem}[section]
\newtheorem{lemma}[theorem]{Lemma}




\date{\today}
\begin{document}


\title[The number and location of two particle]{The number and location of two particle Schr\"odinger operators on a lattice}

\author{Sobir Ulashov, Shakhobiddin Khamidov, Shukhrat Lakaev}

\address[S. Ulashov]{Samarkand State University, University boulevard, 15, 140104 Samarkand, Uzbekistan}
\email{sobirulashov19@gmail.com}

\address[Sh. Khamidov]{V.I. Romanovskiy Institute of Mathematics, Uzbekistan Academy of Sciences, Tashkent, 100174, Uzbekistan}
\email{shoh.hamidov1990@mail.ru}

\address[Sh. Lakaev]{National University of Uzbekistan, University street, 4, 100174 Tashkent, Uzbekistan}
\email{shlakaev@mail.ru}

\begin{abstract}
We study the Schr\"odinger operators ${H}_{\lambda\mu}(K)$ with $K\in\T^2$ being the fixed quasimomentum of a pair of particles, associated with a system of two arbitrary particles on a two-dimensional lattice $\mathbb{Z}^2$ with on-site and nearest-neighbor interactions of strengths $\lambda\in\R$
and $\mu\in\R$, respectively.
We divide the $(\lambda,\mu)$-plane of parameters $\lambda$ and $\mu$ into connected components, such that in each component, the Schr\"{o}dinger operator $H_{\lambda\mu}(0)$ has a fixed number of eigenvalues. These eigenvalues are located both below the bottom
of the essential spectrum and above its top. Additionally, we establish a sharp lower bound for the number of isolated eigenvalues of $H_{\lambda\mu}(K)$ within each connected component.

\end{abstract}

\maketitle

\black
\section{\textbf{Introduction}}\label{sec:intro}

Lattice models play a significant role in several branches of physics.
Among such models are the few-body lattice Hamiltonians \cite{ham1},
which can be considered a minimalist version of the corresponding Bose- or
Fermi-Hubbard models involving a fixed finite number of particles of a specific type.
These discrete Hamiltoniand have great theoretical interest on their own \cite{ham2,ham3,ham4,ham5,%
ham6,ham7,ham8,ham9,ham10,ham11,ham111,ham12}.
In addition, these discrete Hamiltonians can be viewed as a natural approximation for their continuous counterparts, allowing us to study few-body  phenomena in the context of the theory of bounded operators, as discussed in \cite{ham13}. A still intriguing phenomenon is the celebrated Efimov effect \cite{ham14}, which has been proven to occur not only in the continuous case, but also in lattice three-body problems. This effect has been studied in several papers, including \cite{ham5,
ham6,ham7,ham141,ham15,ham151}.
Furthermore, the discrete
Schr\"odinger represent a simple and natural model for describing few-body systems composed of particles moving through periodic structures.
This is true, for example, the case for ultracold atoms that are injected into optical crystals created by the interference of two counter-propagating laser beams. This is discussed in the literature\cite{ham16,ham17}.

The study of ultracold few-atom systems in optical lattices has received significant attention in recent years, due to the highly controllable parameters that they possess. These parameters include the lattice geometry, particle mass, two-body interactions, and temperature, which can be precisely adjusted to allow for a wide range of experimental possibilities
(see, for example, \cite{ham16,ham18,ham19,ham20} for more information).

Unlike traditional condensed matter systems, where stable composite objects are typically formed by attractive forces, the controllability of ultracold atomic systems in an optical lattice provides an opportunity to experimentally observe stable repulsive bound pairs of ultracold atoms. For example, see references \cite{ham17,ham21} for more information on this phenomenon. One-particle, one-dimensional lattice Hamiltonians have already been of interest in a variety of applications. For example, in \cite{ham22}, a one-dimensional lattice Hamiltonian for a single particle has been used to demonstrate how the arrangement of molecules of a particular class in certain lattice structures can increase the probability of nuclear fusion.

Unlike in the continuous case, the few-body lattice system does not permit the separation of the center-of-mass motion. Nevertheless, the discrete translational invariance allows us to utilize the Floquet-Bloch decomposition (see, for instance, \cite{ham23}, Section 4).
In particular, the total $n$-particle Hamiltonian $\mathrm{H}$ can be expressed as a direct integral over the (quasi)momentum space:
\begin{equation}\label{U1}
\mathrm{H}\simeq\int\limits_{K\in \mathbb{T}^d} ^\oplus  H(K)\,d
K,
\end{equation}
where $\mathbb{T}^d$ is the $d$-dimensional torus and $H(K)$ is the fiber
Hamiltonian, acting  on the corresponding functional Hilbert space on
$\mathbb{T}^{(n-1)d}$.

Recall that the fiber Hamiltonians $H(K)$ have been studied in various works, where they non-trivially depend on the quasimomentum $K \in \mathbb{T}^d$ (see, for example, \cite{ham1,ham23,ham24,ham25}).
It is well known that the Efimov effect \cite{ham14}, which we have already mentioned before, is originally observed in three-body systems moving in three-dimensional space $\mathbb{R}^3$.

The main idea of the Efimov effect is the following. A system of three particles in ${\mathbb R}^3$ with pairwise attractive short-range potentials has an infinite number of bound states with energies converging exponentially to zero, if the two-body subsystems do not have a negative energy spectrum and at least two of them are resonantly interacting, meaning that any arbitrarily small perturbation of the two-body interaction can produce a negative energy state.
A rigorous mathematical proof of the Efimov effect can be found in the following references \cite{ham26,ham27,ham28}.
In
\cite{ham5,ham6,ham7}, the existence of the Efimov  effect has also been proven for three-body systems on the  three-dimensional lattice $\mathbb{Z}^3$. In the latter case, a mathematical proof is available \cite{ham29}, and the phenomenon has been named the super Efimov effect due to the double exponential convergence of the binding energies towards the three-body threshold.

In this paper, we investigate the way in which new eigenvalues arise from the lower and/or upper thresholds of the essential (continuous) spectrum of the fiber Hamiltonians $H(K)$.
In order to obtain more detailed information, we consider the interaction between particles, which contains two terms. The first term is only non-trivial when the particles are located on the same site of the lattice. The second term is only non-trivial when the particles are positioned on the nearest neighboring sites.

Thus, as the entries $H(K)$ in \eqref{U1}, in this work we study the
family of the fiber Hamiltonians
\begin{equation*}
H_{\lambda\mu}(K):=H_0(K) + V_{\lambda\mu},\qquad
K=(K_1,K_2)\in\mathbb{T}^2,
\end{equation*}
where $H_0(K)$ is the fiber kinetic-energy operator,
$$
\bigl(H_0(K)f\bigr)(p)=\mathcal{E}_K(p)f(p), \quad p=(p_1,p_2)\in\mathbb{T}^2, \quad
 f\in L^{2}(\mathbb{T}^2),
$$
with
\begin{equation*}
\mathcal{E}_K(p):= \varepsilon(p)+\gamma \varepsilon(K-p), \,\,\gamma>0
\end{equation*}
and $V_{\lambda\mu}$ is the interaction potential. The parameters
$\lambda$ and $\mu$, called coupling constants, describe interactions
between the particles which are located on one site and  on
nearest-neighboring-sites of the lattice, respectively.

Within this new model, we find the exact number and locations of the eigenvalues of the operator $H_{\lambda\mu}(0)$, as well as the mechanisms of emission and absorption of these eigenvalues at the thresholds of its essential spectrum. We also establish sharp lower bounds on the number of isolated eigenvalues $H_{\lambda\mu}(K)$, which lie both below the essential spectrum and above it, depending on the quasimomentum $K \in \mathbb{T}^2$. For this, we use the results obtained for $H_{\lambda\mu}(0)$ and the nontrivial dependence of the dispersion relation on the quasimomentum $K \in \mathbb{T}^2$.

\section{ The two-particle Hamiltonian on lattices} \label{sec:hamiltonian}

\subsection{The two-particle Hamiltonian: the position-space representation}\label{subsection_position}

Let ${\mathbb{Z}}^2$ be the two-dimensional integer lattice and $\ell^2([{\mathbb{Z}}^2]^2)$
be the Hilbert space of square-summable functions on the Cartesian product of ${\mathbb{Z}}^2$.

The free Hamiltonian $\hat{H}_0$ of a system consisting of two arbitrary particles (a boson and a fermion) on a two-dimensional lattice $\mathbb{Z}^2$ acts on $\ell^2[(\mathbb{Z}^{2})^2]$ and has the form

 \begin{equation*}
\widehat{\mathbb{H}}_{\,0}=\frac{1}{2m_1}{\Delta}\otimes I +\frac{1}{2m_2}I\otimes \Delta,
\end{equation*}
where $m_\alpha>0,\,\, \alpha=1,2,$ are the masses of the particles.

The lattice Laplacian, denoted by $\Delta$, is a difference operator that describes the movement  of a particle from one site to its nearest-neighbor sites. It is defined as:
$$
(\Delta\hat{\psi})(x)=\sum_{s\in\mathbb{Z}^2,\mid s\mid=1}[\hat{\psi}(x)-\hat{\psi}(x+s)],\,\,\hat{\psi}\in\ell^{2}(\mathbb{Z}^{2}).
$$

We note that $\widehat{\mathbb{H}}_{\,0}$ acts in $\ell^{2}[(\mathbb{Z}^{2})^{2}]$ as
\begin{equation*}
(\widehat{\mathbb{H}}_{\,0} \hat{\psi})(x,y)= \sum_{s_1\in\mathbb{Z}^2} \hat \epsilon(s_1) \hat
{\psi}(x+s_1,y) + \gamma \sum_{s_2\in\mathbb{Z}^2} \hat \epsilon(s_2) \hat {\psi}(x,y+s_2),\,\,\hat {\psi}\in \ell^{2}[(\mathbb{Z}^{2})^{2}],
\end{equation*}
where $\gamma=\frac{m_1}{m_2}$ and the series  $\sum_{s\in\mathbb{Z}^2} \hat \epsilon(s)$ is assummed to be absolutely convergent, that is
\begin{equation*}
\{\hat \epsilon(s)\}_{s\in \mathbb{Z}^2}\in \ell^{1}(\mathbb{Z}^{2}).
\end{equation*}
We also assume that the "self-adjointness" condition is satisfied
$$
\hat \epsilon(s)=\overline{\hat \epsilon(-s)},\,\,s\in \mathbb{Z}^2.
$$

In the position-space representation, the total Hamiltonian for two interacting particles, $\widehat {\mathbb{H}}_{\lambda\mu}$, is a bounded, self-adjoint operator that acts on the space $\ell^2[(\mathbb{Z}^2)^2]$. This Hamiltonian is associated with a system of two arbitrary particles interacting via a zero-range  and nearest-neighbor potential $\hat v_{\lambda\mu}$ and is expressed as
\begin{equation*}
\hat {\mathbb{H}}_{\lambda\mu}=\hat {\mathbb{H}}_0 + \hat {\mathbb{V}}_{\lambda\mu},\qquad
\lambda, \mu\in\mathbb{R}.
\end{equation*}

The interaction $\hat{\mathbb{V}}_{\lambda \mu}$ is defined as the
multiplication operator by a function $\hat v_{\lambda\mu}(\cdot)$:
\begin{equation*}
(\hat {\mathbb{V}}_{\lambda\mu} \hat\psi)(x,y) = \hat v_{\lambda\mu}(x-y)
\hat\psi(x,y),\,\,\  \hat\psi\in\ell^{2}[(\mathbb{Z}^{2})^{2}],
\end{equation*}
where $\hat v_{\lambda\mu}\in \ell^{1}(\mathbb{Z}^{2}).$

\subsection{The two-particle Hamiltonian: the quasimomentum-space representation}

Let $\mathbb{T}^2= \allowbreak ( \mathbb{R} /2\pi \mathbb{Z})^2  \equiv [-\pi,\pi)^2$ be
the two-dimensional torus, which is the Pontryagin dual group of $\mathbb{Z}^2$
It is equipped with the the Haar measure $\mathrm{d} p$. Let $L^2(\mathbb{T}^2 \times \mathbb{T}^2)$ denote the
Hilbert space of square-integrable functions on
$\mathbb{T}^2\times\mathbb{T}^2.$
Let $\mathcal{F}: \ell^2(\mathbb{Z}^2) \rightarrow L^2(\mathbb{T}^2)$ be the standard Fourier transform. The Fourier transform is defined as:
\begin{equation*}
(\mathcal{F} \hat f)(p)=\frac{1}{(2\pi)} \sum_{x\in\mathbb{Z}^2} \hat f(x) e^{ip\cdot
x},
\end{equation*}
where $p\cdot x: = p_1x_1+p_2x_2$ for $p\in\mathbb{T}^2$ and
$x\in\mathbb{Z}^2.$

In the quasimomentum space, in $L^2(\mathbb{T}^2 \times \mathbb{T}^2)$ the two-particle Hamiltonian $\mathbb{H}_{\lambda\mu}$ is represented as
$$
\mathbb{H}_{\lambda\mu}:=(\mathcal{F}\otimes\mathcal{F}) \hat
{\mathbb{H}}_{\lambda\mu}(\mathcal{F}\otimes\mathcal{F})^*:=\mathbb{H}_0 + \mathbb{V}_{\lambda\mu}.
$$
Here, the free Hamiltonian is given by $ \mathbb{H}_0=(\mathcal{F} \otimes \mathcal{F}) \hat {\mathbb{H}}_0
(\mathcal{F}\otimes \mathcal{F})^*$, which is the multiplication operator:
$$
(\mathbb{H}_0 f)(p,q) = [\epsilon(p) + \gamma\epsilon(q)]f(p,q),\,\,\gamma>0,
$$
where
\begin{equation}\label{U2}
\epsilon(p) = \sum_{s\in\mathbb{Z}^2} \hat \epsilon(s) e^{ip\cdot s},\quad
p\in \mathbb{T}^2,
\end{equation}
is a real-valued function on $\mathbb{T}^2$.

The interaction $\mathbb{V}_{\lambda\mu}$ is a (partial) integral operator defined as $(\mathcal{F} \otimes
\mathcal{F})\hat{\mathbb{V}}_{\lambda\mu} (\mathcal{F}\otimes\mathcal{F})^*$ and it can be written as:
$$
(\mathbb{V}_{\lambda\mu} f)(p,q) = \frac{1}{(2\pi)^2}\int \limits_{\mathbb{T}^2}
v_{\lambda\mu}(p-u) f(u,p+q-u)\mathrm{d} u,
$$
where
\begin{equation}\label{U3}
v_{\lambda\mu}(p)=\sum_{s\in\mathbb{Z}^2} \hat v_{\lambda\mu}(s) e^{ip\cdot s},\quad
p\in \mathbb{T}^2.
\end{equation}

\subsection{The Floquet-Bloch decomposition of $\mathbb{H}_{\lambda\mu}$ and discrete Schr\"odinger operators}\label{subsec:von_neuman}

Since $\hat H_{\lambda\mu}$ commutes with the discrete group $\mathbb{Z}^2$,
which is represented by shift operators on the lattice, we can decompose
the space $L^2(\mathbb{T}^2 \times \mathbb{T}^2)$ and $H_{\lambda \mu}$ into the von Neumann direct integral:
\begin{equation*}
L^{2}(\mathbb{T}^2\times \mathbb{T}^2)\simeq \int\limits_{K\in \mathbb{T}^2} \oplus
L^{2}(\mathbb{T}^2)\,\mathrm{d} K,
\end{equation*}

\begin{equation*}
\mathbb{H}_{\lambda\mu} \simeq \int\limits_{K\in \mathbb{T}^2} \oplus
H_{\lambda\mu}(K)\,\mathrm{d} K.
\end{equation*}
The fiber operator $H_{\lambda \mu}(K)$, $K \in \mathbb{T}^2$, is a self-adjoint operator in $L^2(\mathbb{T}^2)$ and defined as
\begin{equation*}
H_{\lambda \mu}(K) = H_0(K) + V_{\lambda \mu},
\end{equation*}
 where $H_0(K)$ is the unperturbed operator and $V_{\lambda \mu}$ is the perturbation operator.
 The unperturbed operator $H_0$ is the multiplication operator by a function $\mathcal{E}_K$:
 $$
 \mathcal{E}_K(p) = \varepsilon(p) + \gamma \varepsilon(K - p),
 $$
 where $\gamma > 0$. The perturbation operator $V_{\lambda \mu}$ is defined by the integral:
 \begin{align}\label{U4}
 (V_{\lambda\mu} f)(p)= &\frac{1}{(2\pi)^2}\int \limits_{\mathbb{T}^2}v_{\lambda\mu}(p-q)f(q)\mathrm{d} q.
\end{align}

The parameter $K \in \mathbb{T}^2$ is typically referred to as the \emph{two-particle quasi-momentum,} and the fiber $H_{\lambda \mu}(K)$ is known as the \emph{discrete Schr\"odinger operator} associated with the two-particle Hamiltonian $\hat{\mathbb{H}}_{\lambda \mu}$.

\subsection{The essential spectrum of discrete Schr\"odinger operators} \label{subsec:ess_spec}

Since the operator $V_{\lambda \mu}$ is compact, according to Weyl's theorem, for
any $K\in\mathbb{T}^2$ the essential spectrum
$\sigma_{\mathrm{ess}}(H_{\lambda\mu}(K))$ of the operator $H_{\lambda\mu}(K)$ coincides with the spectrum of the operator
$H_0(K),$ i.e.,
\begin{equation}\label{U5}
\sigma_{\mathrm{ess}}(H_{\lambda\mu}(K))=\sigma(H_0(K)).
\end{equation}
In particular, the essential spectrum is given by
\begin{equation*}
\sigma_{\mathrm{ess}}(H_{\lambda\mu}(K))=
[\mathcal{E}_{\min}(K),\mathcal{E}_{\max}(K)],
\end{equation*}
where
\begin{align*}
\mathcal{E}_{\min}(K):= & \min_{p\in  \mathbb{T} ^2}\,\mathcal{E}_K(p), \quad
\mathcal{E}_{\max}(K):= \max_{p\in  \mathbb{T} ^2}\,\mathcal{E}_K(p).
\end{align*}

\section{Main results}\label{sec:main_results}

Let $L^{2,e}(\mathbb{T}^2)$ and $L^{2,o}(\mathbb{T}^2)$ be subspaces of $L^2(\mathbb{T}^2)$. These subspaces consist of even and odd functions, respectively.
More formally, we can write:

\begin{align*}
&L^{2,e}(\mathbb{T}^2)=\{{f}\in L^{2}(\mathbb{T}^2): {f}(p)=
{f}(-p), \mbox{for a.e}\,\,p\in\mathbb{T}^2\}
\end{align*}
and
\begin{align*}
&L^{2,o}(\mathbb{T}^2)=\{{f}\in L^{2}(\mathbb{T}^2): {f}(p)=
-{f}(p), \mbox{for a.e}\,\,p\in\mathbb{T}^2\}.
\end{align*}

\begin{lemma}\label{shuh1}
The following statement is correct:
\begin{equation*}
L^{2}(\mathbb{T}^2)=L^{2,e}(\mathbb{T}^2)\oplus
L^{2,o}(\mathbb{T}^2).
\end{equation*}

\end{lemma}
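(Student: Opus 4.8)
The plan is to construct the decomposition explicitly via the even and odd parts of a function, using that the reflection $\iota\colon p \mapsto -p$ on $\mathbb{T}^2$ is a measure-preserving involution for the Haar measure $\mathrm{d} p$. Concretely, for $f \in L^2(\mathbb{T}^2)$ I would set
$$f_e(p) := \tfrac{1}{2}\bigl(f(p) + f(-p)\bigr), \qquad f_o(p) := \tfrac{1}{2}\bigl(f(p) - f(-p)\bigr),$$
and observe that $f = f_e + f_o$. Because $\iota$ preserves $\mathrm{d} p$, the map $f \mapsto f \circ \iota$ is a unitary (in particular bounded) involution on $L^2(\mathbb{T}^2)$, so $f_e$ and $f_o$ are again square-integrable; a direct substitution $p \mapsto -p$ shows $f_e(-p) = f_e(p)$ and $f_o(-p) = -f_o(p)$ for a.e. $p$, i.e. $f_e \in L^{2,e}(\mathbb{T}^2)$ and $f_o \in L^{2,o}(\mathbb{T}^2)$. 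This already yields $L^2(\mathbb{T}^2) = L^{2,e}(\mathbb{T}^2) + L^{2,o}(\mathbb{T}^2)$.

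Next I would establish that the sum is direct and orthogonal, which is what the symbol $\oplus$ records. For uniqueness it suffices to show $L^{2,e}(\mathbb{T}^2) \cap L^{2,o}(\mathbb{T}^2) = \{0\}$: if $f$ is simultaneously even and odd then $f(p) = f(-p) = -f(p)$ for a.e. $p$, forcing $f = 0$ in $L^2$. For orthogonality I would take $g \in L^{2,e}(\mathbb{T}^2)$ and $h \in L^{2,o}(\mathbb{T}^2)$ and compute the inner product, using the substitution $p \mapsto -p$ together with the parities of $g$ and $h$:
$$\langle g, h\rangle = \int_{\mathbb{T}^2} g(p)\overline{h(p)}\,\mathrm{d} p = \int_{\mathbb{T}^2} g(-p)\overline{h(-p)}\,\mathrm{d} p = -\int_{\mathbb{T}^2} g(p)\overline{h(p)}\,\mathrm{d} p = -\langle g,h\rangle,$$
whence $\langle g, h \rangle = 0$. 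Together with the decomposition $f = f_e + f_o$ and the trivial intersection, this gives the claimed orthogonal direct sum.

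The argument is essentially routine, and the only point requiring a little care is the measure-theoretic justification: one must confirm that the reflection $\iota$ leaves the Haar measure invariant, so that $f_e, f_o \in L^2$ and the change of variables in the inner product is legitimate, and that all identities are read as holding almost everywhere rather than pointwise. (I would also note in passing that the displayed definition of $L^{2,o}(\mathbb{T}^2)$ in the statement should read $f(p) = -f(-p)$; as written it describes only the zero function.) No deeper obstacle is expected, since the even and odd projections $P_e f = f_e$ and $P_o f = f_o$ are manifestly complementary orthogonal projections on $L^2(\mathbb{T}^2)$.
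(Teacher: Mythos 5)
Your proof is correct and takes essentially the same route as the paper: both decompose $f$ into its even and odd parts $f_e(p)=\tfrac{1}{2}(f(p)+f(-p))$, $f_o(p)=\tfrac{1}{2}(f(p)-f(-p))$ and obtain orthogonality from the parity of the integrand under the measure-preserving reflection $p\mapsto -p$. If anything you are slightly more careful than the paper, which asserts that $f_e\cdot f_o$ lies in $L^{2,o}(\mathbb{T}^2)$ even though the product of two $L^2$ functions is in general only in $L^1$ (the vanishing of its integral is of course still valid), and your side remark is also right: the paper's displayed definition of $L^{2,o}(\mathbb{T}^2)$ contains a typo and should read $f(p)=-f(-p)$.
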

\begin{proof}
The proof follows from the fact that every element of $L^2(\mathbb{T}^2)$ can be expressed as the sum of two functions, which one of them belongs to $L^{2,e}(\mathbb{T}^2)$ and other one belongs to $L^{2,o}(\mathbb{T}^2)$.

Let $f \in L^{2}(\mathbb{T}^2)$ and we define the even and odd parts of $f$ as follows:
$$
f_e(p)=\frac{f(p)+f(-p)}{2}\in L^{2,e}(\mathbb{T}^2)
$$
and
$$
f_o(p)=\frac{f(p)-f(-p)}{2}\in L^{2,o}(\mathbb{T}^2).
$$
Then, we have
$f(p)=f_e(p)+f_o(p).$

On the other hand, for any function $g(p)$ belonging to the space $L^{2,o}(\mathbb{T}^2)$, we have

$$
\int\limits_{\mathbb{T}^2}g(p)dp=0.
$$

Note that if both $f_e$ and $f_o$ belong to $L^2(\mathbb{T}^2)$, their product $f_e \cdot f_o$ belongs to $L^{2,o}(\mathbb{T}^2)$. Consequently, we have $f_e(p)\bot f_o(p)$ in $L^2(\mathbb{T}^2)$.

\end{proof}

We assume that the potential function $\hat{v}_{\lambda\mu}(\cdot)$  is defined as follows:

\begin{equation}\label{U6}
\hat{v}_{\lambda\mu}(x)= \left\lbrace\begin{array}{ccc}
 \lambda, \quad \quad |x|=0,\\
\frac {\mu}{2},\quad \quad |x|=1,\\
0, \quad \quad |x|>1,
\end{array}\right.
\end{equation}
where $\lambda$ and $\mu$ are real numbers.
Using the relations \eqref{U3}, \eqref{U4} and \eqref{U6}, we can derive the following relations:
\begin{align}\label{U7}
(V_{\lambda\mu}f)(p)=& \frac{\lambda}{(2\pi)^2}\int \limits_{\mathbb{T}^2}f(q)\mathrm{d} q+
\frac{\mu}{(2\pi)^2}\sum\limits_{i=1}^2\cos p_i\int \limits_{\mathbb{T}^2}\cos q_if(q)\mathrm{d} q \\ \nonumber
&+\frac{\mu}{(2\pi)^2}\sum\limits_{i=1}^2\sin p_i\int \limits_{\mathbb{T}^2}\sin q_if(q)\mathrm{d} q ,\quad f\in L^{2}(\mathbb{T}^2).
\end{align}

In addition, we assume that  the function ${\hat{\varepsilon}}(\cdot)$ is defined as follows:
\begin{equation*}
  {\hat{\varepsilon}(x)=}
\left\lbrace\begin{array}{ccc}
2,\quad \quad    |x|=0,\\
-\frac{1}{2},\,\, \quad  |x|=1,\\
0,\quad \quad   |x|>1.
\end{array}\right.
\end{equation*}

Then, using the definition of $\epsilon(p)$ in \eqref{U2}, we get
$$
\epsilon(p) := \sum\limits_{i=1}^2 \big(1-\cos p_i),\quad
p=(p_1,p_2)\in \mathbb{T}^2.
$$

Let $K=0$. The operator $H_0(0)$ is the multiplication operator by the even function $\mathcal{E}_0(p) = (1 + \gamma) \epsilon(p)$ in $L^2(\mathbb{T}^2)$. Therefore, for each $\theta \in \{e, o\}$, the subspace $L^{2,\theta}(\mathbb{T}^2)$ is invariant under $H_0(0)$.
Taking into account that the interaction operator $V_{\lambda \mu}$ has the form given in equation \eqref{U7}, denoting by $V_{\lambda \mu}^e$ and ${V}^o_{\mu}$ the part of  $V_{\lambda \mu}$ acting on $L^{2,e}(\mathbb{T}^2)$  and $L^{2,o}(\mathbb{T}^2)$, respectively, we obtain the following expressions
\begin{align}\label{U8}
({V}_{\lambda\mu}^ef)(p)=\frac{\lambda}{(2\pi)^2}\int \limits_{\mathbb{T}^2}f(q)\mathrm{d} q  +\frac{\mu}{(2\pi)^2}
\sum\limits_{i=1}^2\cos p_i\int \limits_{\mathbb{T}^2}\cos q_if(q)\mathrm{d} q
\end{align}
and
\begin{align}\label{U9}
({V}_{\mu}^o{f})(p)=\frac{\mu}{(2\pi)^2}
\sum\limits_{i=1}^2\sin p_i\int \limits_{\mathbb{T}^2}\sin q_if(q)\mathrm{d} q.
\end{align}
From the above expressions, we can conclude that
$$
{H}_{\lambda\mu}(0)\big|_{L^{2,e}(\mathbb{T}^2)} =
{H}_{\lambda\mu}^e(0):=H_0(0) + {V}_{\lambda\mu}^e
$$
and
$$
{H}_{\lambda\mu}(0)\big|_{L^{2,o}(\mathbb{T}^2)} =
{H}_{\mu}^o(0):=H_0(0) + {V}_{\mu}^o.
$$
Therefore, the spectrum of $H_{\lambda\mu}(0)$, denoted by $\sigma(H_{\lambda\mu}(0))$, can be expressed as the union of the spectra of the restrictions of $H_{\lambda \mu}(0)$ to the subspaces $L^{2,e}(\mathbb{T}^2)$ and $L^{2,o}(\mathbb{T}^2)$:
\begin{equation*}
\sigma\Big({H}_{\lambda\mu}(0)\Big)=
\sigma\Big({H}_{\lambda\mu}(0)\big|_{L^{2,e}(\mathbb{T}^2)}\Big)\cup\sigma\Big({H}_{\lambda\mu}(0)\big|_{L^{2,o}(\mathbb{T}^2)}\Big).
\end{equation*}
Here, $A|_\mathcal{H}$ denotes the restriction of the self-adjoint operator $A$ to the subspace $\mathcal{H}$. Thus, the study of the discrete spectrum of $H_{\lambda\mu}(0)$ can be reduced to studying the spectra of these restrictions.

Our first main result is a generalization of
Theorems 1 and 2 in \cite{ham23}, which we present below.
\begin{theorem}\label{lak1}
Suppose that, for some values of
$\lambda,\mu\in\mathbb{R}$, $H_{\lambda \mu}(0)$ has $n$ eigenvalues lying below
(respectively,  above) the essential spectrum of $H_{\lambda \mu}(0)$.  Then, for each $K \in \mathbb{T}^2$, the operator
$H_{\lambda \mu}(K)$ has at least $n$ eigenvalues lying below
(respectively, above) the essential spectrum of $H_{\lambda \mu}(K)$.
\end{theorem}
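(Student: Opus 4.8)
The plan is to isolate the entire $K$-dependence inside the multiplication operator and then reduce the theorem to a single monotonicity inequality for the dispersion relation. The starting point is that $V_{\lambda\mu}$ is a convolution operator, cf. \eqref{U4}, and hence commutes with every translation. Introducing the unitary shift $(U_\phi f)(p)=f(p-\phi)$ on $L^2(\mathbb{T}^2)$, a change of variables in \eqref{U4} gives $U_\phi^* V_{\lambda\mu} U_\phi = V_{\lambda\mu}$ for all $\phi\in\mathbb{T}^2$. On the other hand, expanding $\cos(K_i-p_i)=\cos K_i\cos p_i+\sin K_i\sin p_i$ one obtains
\[
\mathcal{E}_K(p)=2(1+\gamma)-\sum_{i=1}^2 R_i\cos(p_i-\phi_i),\qquad R_i:=\sqrt{1+2\gamma\cos K_i+\gamma^2},
\]
for a suitable phase $\phi=\phi(K)$. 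Choosing exactly this $\phi$, the operator $U_\phi^*H_{\lambda\mu}(K)U_\phi = \tilde H_0(K)+V_{\lambda\mu}=:\tilde H_{\lambda\mu}(K)$ is unitarily equivalent to $H_{\lambda\mu}(K)$, where $\tilde H_0(K)$ is multiplication by the even function $\tilde{\mathcal{E}}_K(p)=2(1+\gamma)-\sum_i R_i\cos p_i$, whose minimum $\mathcal{E}_{\min}(K)=2(1+\gamma)-\sum_i R_i$ and maximum $\mathcal{E}_{\max}(K)=2(1+\gamma)+\sum_i R_i$ are now attained at $p=0$ and $p=(\pi,\pi)$.

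The key elementary fact is $R_i\le 1+\gamma$ for every $K$, which follows from $\cos K_i\le 1$ since $R_i^2=1+2\gamma\cos K_i+\gamma^2\le(1+\gamma)^2$. Combined with $\mathcal{E}_0(p)-\mathcal{E}_{\min}(0)=(1+\gamma)\sum_i(1-\cos p_i)$ and $\mathcal{E}_{\max}(0)-\mathcal{E}_0(p)=(1+\gamma)\sum_i(1+\cos p_i)$, this produces the two pointwise bounds
\[
\tilde{\mathcal{E}}_K(p)-\mathcal{E}_{\min}(K)=\sum_i R_i(1-\cos p_i)\le \mathcal{E}_0(p)-\mathcal{E}_{\min}(0)
\]
and
\[
\mathcal{E}_{\max}(K)-\tilde{\mathcal{E}}_K(p)=\sum_i R_i(1+\cos p_i)\le \mathcal{E}_{\max}(0)-\mathcal{E}_0(p),
\]
valid for all $p\in\mathbb{T}^2$.

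With these in hand I would run a min-max comparison. For the ``below'' statement, let $L_0\subset L^2(\mathbb{T}^2)$ be the $n$-dimensional span of the eigenfunctions of $H_{\lambda\mu}(0)$ lying below $\mathcal{E}_{\min}(0)$, so that $Q_0(\psi):=\langle H_{\lambda\mu}(0)\psi,\psi\rangle-\mathcal{E}_{\min}(0)\|\psi\|^2$ is strictly negative on $L_0\setminus\{0\}$. Because $V_{\lambda\mu}$ enters the quadratic forms of $\tilde H_{\lambda\mu}(K)$ and of $H_{\lambda\mu}(0)$ identically, the difference of the two shifted forms is purely multiplicative,
\[
\big[\langle \tilde H_{\lambda\mu}(K)\psi,\psi\rangle-\mathcal{E}_{\min}(K)\|\psi\|^2\big]-Q_0(\psi)=\int_{\mathbb{T}^2}\big\{[\tilde{\mathcal{E}}_K(p)-\mathcal{E}_{\min}(K)]-[\mathcal{E}_0(p)-\mathcal{E}_{\min}(0)]\big\}|\psi(p)|^2\,\mathrm{d} p\le 0
\]
by the first pointwise bound. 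Hence $\langle \tilde H_{\lambda\mu}(K)\psi,\psi\rangle<\mathcal{E}_{\min}(K)\|\psi\|^2$ on $L_0\setminus\{0\}$, and the variational principle (using that $\mathcal{E}_{\min}(K)$ is the bottom of $\sigma_{\mathrm{ess}}$) yields at least $n$ eigenvalues of $\tilde H_{\lambda\mu}(K)$, hence of $H_{\lambda\mu}(K)$, below its essential spectrum. The ``above'' statement is identical after passing to $-H_{\lambda\mu}$ and invoking the second pointwise bound. The only step requiring genuine care is the first paragraph, namely confining the whole $K$-dependence to the multiplication operator through $U_\phi$ and checking that this shift leaves the convolution $V_{\lambda\mu}$ untouched; once that is in place, the estimate $R_i\le 1+\gamma$ does all the work.
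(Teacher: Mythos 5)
Your proof is correct, and at bottom it is the same mechanism as the paper's---a min-max comparison of binding energies between quasimomentum $K$ and $K=0$, resting on the two structural facts that $V_{\lambda\mu}$ is $K$-independent and translation-invariant and that the recentred dispersion at $K$ is pointwise dominated by the one at $0$---but the route through it is genuinely different. The paper defines the min-max values $e_n(K;\lambda,\mu)$, $E_n(K;\lambda,\mu)$ and proves in Lemma \ref{shuh13} that the maps $K_i\mapsto\mathcal{E}_{\min}(K)-e_n(K;\lambda,\mu)$ and $K_i\mapsto E_n(K;\lambda,\mu)-\mathcal{E}_{\max}(K)$ are non-increasing on $(-\pi,0]$ and non-decreasing on $[0,\pi]$ in each coordinate, so that $K=0$ minimizes the binding energy and \eqref{U21} follows; you instead compare $K$ with $0$ in one shot, conjugating by the shift $U_{\phi(K)}$ (legitimate precisely because $V_{\lambda\mu}$ is a convolution---a step the paper uses only implicitly in its change of variables) and then letting the single inequality $R_i=\sqrt{1+2\gamma\cos K_i+\gamma^2}\le 1+\gamma$ produce the pointwise bounds, after which the $n$-dimensional spectral subspace of $H_{\lambda\mu}(0)$ serves as the trial space. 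What the paper's route buys is a stronger intermediate statement (coordinate-wise monotonicity of the binding energy along the torus, not merely its minimality at $K=0$); what your route buys is directness and uniformity in $\gamma$: your amplitude $R_i$ is the correct one for every $\gamma>0$, whereas the identity displayed in the paper's proof of Lemma \ref{shuh13}, with coefficient $(1+\gamma)\cos\tfrac{K_i}{2}$, is exact only for $\gamma=1$, since $R_i^2=(1-\gamma)^2+4\gamma\cos^2\tfrac{K_i}{2}$ differs from $(1+\gamma)^2\cos^2\tfrac{K_i}{2}$ by $(1-\gamma)^2\sin^2\tfrac{K_i}{2}$; so your computation in fact repairs a small inaccuracy while still yielding the monotone behaviour the paper needs. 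One pedantic point worth a sentence in your write-up: when $\gamma=1$ and $K_i=\pi$ one has $R_i=0$ and the phase $\phi_i$ is undetermined, but then the $i$-th term of the dispersion vanishes identically, so any choice of $\phi_i$ works and the argument is unaffected.
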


Let us define the following nine sets in the $(\lambda, \mu)$ plane:
\begin{equation*}
\begin{aligned}
\mathcal{S}_{0}^+:= & \Big\{(\lambda,\mu)\in\mathbb{R}^2:\,\,\mu>\frac{\pi(1+\gamma)}{8-2\pi}\Big\},\\
\mathcal{S}_{0}:= &\Big\{(\lambda,\mu)\in\mathbb{R}^2:\,\,|\mu|<\frac{\pi(1+\gamma)}{8-2\pi}\Big\},\\
\mathcal{S}_{0}^-:=
&\Big\{(\lambda,\mu)\in\mathbb{R}^2:\,\,\mu<\frac{\pi(1+\gamma)}{2\pi-8}\Big\},\\
\allowbreak
\mathcal{C}_0^+:= &\Big\{(\lambda,\mu)\in\mathbb{R}^2:\,\,2\mu +\lambda-\frac{\lambda\mu}{{1+\gamma}}>0,\,\, \mu<1+\gamma \Big\},\\
\mathcal{C}_1^+:= &\Big\{(\lambda,\mu)\in\mathbb{R}^2:\,\,2\mu +\lambda-\frac{\lambda\mu}{{1+\gamma}}<0\Big\},\\
\mathcal{C}_2^+:= &\Big\{(\lambda,\mu)\in\mathbb{R}^2:\,\,2\mu +\lambda-\frac{\lambda\mu}{{1+\gamma}}>0,\,\, \mu>1+\gamma\Big\}, \\
\mathcal{C}_0^-:= &\Big\{(\lambda,\mu)\in\mathbb{R}^2:\,\,2\mu +\lambda+\frac{\lambda\mu}{{1+\gamma}}>0,\,\, \mu>-(1+\gamma) \Big\},\\
\mathcal{C}_1^-:= &\Big\{(\lambda,\mu)\in\mathbb{R}^2:\,\,2\mu +\lambda+\frac{\lambda\mu}{{1+\gamma}}<0\Big\},\\
\mathcal{C}_2^-:=
&\Big\{(\lambda,\mu)\in\mathbb{R}^2:\,\,2\mu +\lambda+\frac{\lambda\mu}{{1+\gamma}}>0,\,\,
\mu<-(1+\gamma)\Big\},
\end{aligned}
\end{equation*}
where $\gamma>0.$
(see Figures 1).

Furthermore, in the $(\lambda, \mu)$-plane, let us define the following three sets:
\begin{equation*}
\begin{aligned}
\mathcal{D}_{0}^+:= & \Big\{(\lambda,\mu)\in\mathbb{R}^2:\,\,\mu>\frac{\pi(1+\gamma)}{\pi-2}\Big\},\\
\mathcal{D}_{0}:= &\Big\{(\lambda,\mu)\in\mathbb{R}^2:\,\,|\mu|<\frac{\pi(1+\gamma)}{\pi-2}\Big\},\\
\mathcal{D}_{0}^-:=&\Big\{(\lambda,\mu)\in\mathbb{R}^2:\,\,\mu<\frac{\pi(1+\gamma)}{2-\pi}\Big\}.
\end{aligned}
\end{equation*}

\begin{center}
\begin{tabular}{p{2.9in}p{2.9in}}
\includegraphics[scale=0.4]{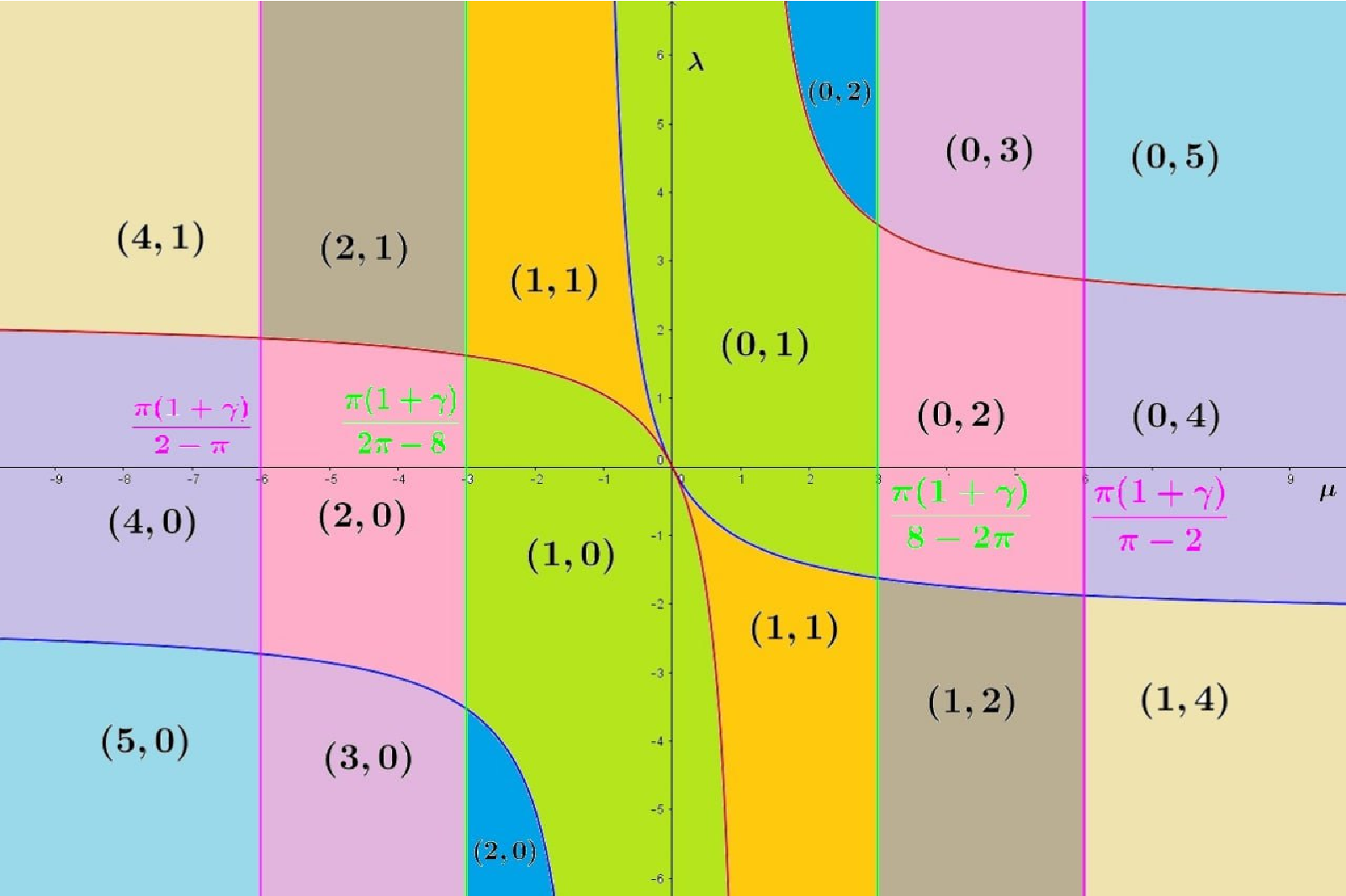}
\end{tabular}\\
Figure 1. A schematic dynamics of the change in the number of eigenvalues depending on $\mu$ and $\lambda$.
\end{center}

It has been found that in each of the above-mentioned components $\mathcal{D}_{0}$, $\mathcal{D}_{0}^\pm$, $\mathcal{S}_{0}$, $\mathcal{S}_{0}^\pm$ and $\mathcal{C}^{\pm}_{k},\,k=0,1,2$, the number of eigenvalues of the operator $H_{\lambda\mu}(0)$ lying outside the essential spectrum \eqref{U5}, remains constant.
These results are established in the next theorem.

\begin{theorem}\label{lak2}
In each of the connected components $\mathcal{D}_{0}$, $\mathcal{D}_{0}^\pm$, $\mathcal{S}_{0}$, $\mathcal{S}_{0}^\pm$ and $\mathcal{C}^{\pm}_{k},\,k=0,1,2$, the number of eigenvalues $n_+(H_{\lambda \mu}(0))$ and $n_-(H_{\lambda \mu}(0))$ of the operator $H_{\lambda \mu}(0)$ lying respectively above and below the essential spectrum of $H_{\lambda \mu}(0)$, remains constant.
\end{theorem}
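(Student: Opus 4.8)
The plan is to reduce the eigenvalue problem in each symmetry sector to a scalar or finite--determinant equation of Birman--Schwinger type, and then to argue that the number of its roots lying outside the essential spectrum can change only when a root reaches one of the two thresholds $\mathcal{E}_{\min}(0)=0$ or $\mathcal{E}_{\max}(0)=4(1+\gamma)$. The locus in the $(\lambda,\mu)$--plane where this happens will be shown to coincide with the union of the boundary curves of the listed components, whence $n_\pm(H_{\lambda\mu}(0))$ is forced to be constant on each of them.

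First I would use Lemma \ref{shuh1} together with the invariance of $H_{\lambda\mu}(0)$ under the coordinate permutation $(p_1,p_2)\mapsto(p_2,p_1)$ to split $H_{\lambda\mu}(0)$ into three reducing sectors: a symmetric--even part whose perturbation is the rank--two operator built from the shape functions $1$ and $\cos p_1+\cos p_2$ (the only sector that sees $\lambda$), an antisymmetric--even part which is a rank--one perturbation along $\cos p_1-\cos p_2$, and the odd part $V_\mu^o$ of \eqref{U9}, which acts as two decoupled rank--one operators along $\sin p_1,\sin p_2$ and therefore produces eigenvalues of multiplicity two. In every sector the unperturbed part is multiplication by $\mathcal{E}_0$, so for $z\notin[0,4(1+\gamma)]$ the resolvent $(H_0(0)-z)^{-1}$ is multiplication by $(\mathcal{E}_0-z)^{-1}$ and the equation $H_{\lambda\mu}(0)f=zf$ collapses, upon pairing with the finitely many shape functions, to a determinant equation $\Delta_\bullet(z;\lambda,\mu)=0$ whose entries are the integrals $\frac{1}{(2\pi)^2}\int_{\mathbb{T}^2}\frac{\phi(q)\psi(q)}{\mathcal{E}_0(q)-z}\,\mathrm{d}q$ over products of shape functions. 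By the Birman--Schwinger principle the number of eigenvalues of a sector below $0$ (resp.\ above $4(1+\gamma)$) equals, with multiplicity, the number of roots of $\Delta_\bullet$ in $(-\infty,0)$ (resp.\ $(4(1+\gamma),+\infty)$).

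Next I would record the two structural facts that drive the counting. Each entry integral is strictly monotone in $z$ on each spectral gap since $\partial_z(\mathcal{E}_0-z)^{-1}=(\mathcal{E}_0-z)^{-2}>0$, and each $\Delta_\bullet$ is real--analytic there with limit $1$ as $z\to\mp\infty$; hence the roots are isolated, move continuously and monotonically, and can neither materialise nor escape in the interior of a gap for finite $(\lambda,\mu)$. Consequently $(\lambda,\mu)\mapsto n_\pm(H_{\lambda\mu}(0))$ is constant on every connected component of the complement of the set $\Sigma$ of parameters for which some $\Delta_\bullet$ vanishes at a threshold. It then remains to compute $\Sigma$ by taking the threshold limits of the determinant equations. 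In the odd and antisymmetric--even sectors the shape functions vanish at the extrema of $\mathcal{E}_0$, so the entry integrals stay finite at both thresholds and the conditions reduce to explicit $\mu$--only equations $1+\mu\,\Theta_\bullet=0$ with $\Theta_\bullet$ a convergent (Watson--type) lattice integral; evaluating these yields the horizontal lines of the $\mathcal{S}$ and $\mathcal{D}$ families.

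The crux, and the step I expect to be the main obstacle, is the symmetric--even sector, where the constant shape function forces $\alpha(z)=\frac{1}{(2\pi)^2}\int_{\mathbb{T}^2}\frac{\mathrm{d}q}{\mathcal{E}_0(q)-z}$ to diverge logarithmically as $z$ tends to either threshold (the two--dimensional threshold anomaly). Here one must expand the determinant as $\Delta_{\mathrm{sym}}(z)=\alpha(z)\,c_1(\lambda,\mu)+c_0(\lambda,\mu)+o(1)$ and verify, using the cancellation of the $\alpha^2$--terms, that the leading coefficient $c_1$ is proportional to $2\mu+\lambda\mp\frac{\lambda\mu}{1+\gamma}$ (giving the hyperbola boundaries of the $\mathcal{C}_k^\pm$), and that \emph{along} each hyperbola the finite remainder changes sign exactly across $\mu=\pm(1+\gamma)$ (giving the straight boundaries separating $\mathcal{C}_0^\pm$ from $\mathcal{C}_2^\pm$). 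This needs the precise threshold values of the regularised integrals of $\cos q_i$, $\cos^2 q_i$ and $\cos q_1\cos q_2$ against $(\mathcal{E}_0)^{-1}$, for which I would exploit the identity $\mathcal{E}_0=(1+\gamma)\big[(1-\cos q_1)+(1-\cos q_2)\big]$ together with $\frac{1}{(2\pi)^2}\int\frac{1-\cos q_1}{\mathcal{E}_0}\,\mathrm{d}q=\frac{1}{2(1+\gamma)}$ and the Watson integral $\frac{1}{(2\pi)^2}\int\frac{(1-\cos q_1)(1-\cos q_2)}{(1-\cos q_1)+(1-\cos q_2)}\,\mathrm{d}q=\frac{\pi-2}{\pi}$; a pleasant cross--check is that the factors of $\pi$ cancel in the symmetric--even asymptotics (explaining why the $\mathcal{C}$--conditions are $\pi$--free) while surviving in the odd and antisymmetric--even sectors. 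Once $\Sigma$ is identified with the full set of boundary curves, each listed component lies inside a single connected component of $\mathbb{R}^2\setminus\Sigma$, and the local constancy established above upgrades to the asserted constancy of $n_+$ and $n_-$ on $\mathcal{D}_0,\mathcal{D}_0^\pm,\mathcal{S}_0,\mathcal{S}_0^\pm$ and $\mathcal{C}_k^\pm$.
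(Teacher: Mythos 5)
Your strategy is in substance the paper's own: the paper disposes of Theorem \ref{lak2} by a one-line reference to \cite{ham30}, and the actual working parts are its Lemmas \ref{shuh2}--\ref{shuh12}, which are exactly your reduction --- the even/odd splitting, the further factorization of the even determinant into a symmetric $2\times2$ block $\widetilde{\Delta}^e_{\lambda\mu}=(1+\lambda a)(1+\mu(c+e))-2\lambda\mu b^2$ and the antisymmetric factor $1+\mu(c-e)$, the logarithmic threshold asymptotics with leading coefficients $S^{\pm}=2\mu+\lambda\mp\frac{\lambda\mu}{1+\gamma}$ (your cancellation of the $\alpha^2$-terms is the computation behind Lemma \ref{shuh6}), and the horizontal lines coming from the finite threshold limits of $c-e$ and $f$ (Lemmas \ref{shuh7} and \ref{shuh9}). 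The paper then concludes by exact root counts per region (Lemmas \ref{shuh8}, \ref{shuh10}--\ref{shuh12}), whereas you conclude by local constancy plus identification of a bifurcation locus; that reorganization is legitimate, but two of your steps do not hold as written.

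First, ``roots can neither materialise nor escape in the interior of a gap'' does not follow from monotonicity of the entry integrals plus real-analyticity of the determinants. Monotone entries do not make $\widetilde{\Delta}^e_{\lambda\mu}$ monotone in $z$ (it contains products of entries with couplings of arbitrary sign), and a real-analytic function can acquire a double root at an interior point as $(\lambda,\mu)$ vary, changing the root count by two with no root ever reaching a threshold. The repair is operator-theoretic, not function-theoretic: by Lemma \ref{shuh3} every root of $\Delta_\bullet$ in a gap is an eigenvalue of the self-adjoint operator $H^{e}_{\lambda\mu}(0)$ or $H^{o}_{\mu}(0)$, the family is affine (hence norm-continuous) in $(\lambda,\mu)$, and the spectra of self-adjoint operators satisfy $d_H\bigl(\sigma(A),\sigma(B)\bigr)\le\|A-B\|$; therefore discrete spectrum can neither appear nor vanish at an interior point of a gap, and counts change only through the band edges. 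Alternatively, prove exact counts region by region as in \cite{ham34}, which is what the paper in effect does.

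Second, your locus $\Sigma$ cannot contain the lines $\mu=\pm(1+\gamma)$: these are the asymptotes of the hyperbolas $S^{\pm}=0$, and on $\mu=1+\gamma$ one has $S^{+}=2(1+\gamma)$ for \emph{every} $\lambda$, so no determinant degenerates at a threshold there; your own local-constancy principle then forbids the eigenvalue count from jumping across these lines. Their genuine role is to distinguish the two branches of each hyperbola, i.e.\ the two connected components of the relevant side of $\{S^{\pm}=0\}$, on which the finite part $B^{\pm}$ has opposite signs; so your phrase ``the finite remainder changes sign exactly across $\mu=\pm(1+\gamma)$'' is meaningful only as a comparison \emph{between branches along the hyperbola}, not as a sign change across the line inside an open region, and each $\mathcal{C}^{\pm}_k$ must be exhibited as a union of components of $\mathbb{R}^2\setminus\Sigma$ (note $\mathcal{C}^{\pm}_1$ is itself disconnected). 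A final caution if you push the computation through: your parity bookkeeping is a useful audit of the paper. For $z>\mathcal{E}_{\max}(0)$ one has $a(z)\to-\infty$ at the edge and $a(z)\to0^-$ at $+\infty$, so already for $\mu=0$, $\lambda>0$ the factor $1+\lambda a(z)$ has exactly one zero above the band, while the printed Lemma \ref{shuh11}(a) (with $S^{+}=\lambda\ge0$, $\mu<1+\gamma$) asserts none; in general $S^{+}>0$ forces an \emph{odd} number of roots above by the intermediate value theorem, so the $S^{+}$-inequalities in Lemma \ref{shuh11} and in the definitions of $\mathcal{C}^{+}_k$ appear sign-reversed relative to Lemmas \ref{shuh5}--\ref{shuh6} (equivalently, via the symmetry $p\mapsto p+(\pi,\pi)$, $n_+(\lambda,\mu)=n_-(-\lambda,-\mu)$, which also shows the top-edge asymptotics of $b$ in Lemma \ref{shuh5} should carry the opposite sign, harmless only because $b$ enters through $b^2$). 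Do not calibrate your threshold constants against those printed statements.
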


Theorem \ref{lak2} can be proved in a similar way as it is done in \cite{ham30}.

The following result concerns the number of eigenvalues of the fiber Hamiltonian $H_{\lambda\mu}(K)$ for all $K\in\mathbb{T}^2$ and $(\lambda,\mu)\in\mathbb{R}^2$.

\begin{theorem}\label{lak3}
Let $K\in \mathbb{T}^2$ and $(\lambda, \mu) \in \mathbb{R}^2$. For the numbers
$n_+(H_{\lambda \mu}(K))$ and $n_-(H_{\lambda \mu}(K))$,
which are the number of eigenvalues of the operator $H_{\lambda \mu}(K)$ lying
respectively above and below the essential spectrum $\sigma_{\mathrm{ess}}(H_{\lambda \mu}(K)),$
the following statements hold:

\begin{equation*}
\begin{aligned}
& (\lambda,\mu)\in \mathcal{C}_2^+ \cap \mathcal{S}_{0}^+ \cap  \mathcal{D}_{0}^+& 
\Longrightarrow \qquad  n_+({H}_{\lambda\mu}(K)) =5, \\
& (\lambda,\mu)\in \mathcal{C}_1^+ \cap \mathcal{S}_{0}^+ \cap  \mathcal{D}_{0}^+ & 
\Longrightarrow \qquad  n_+({H}_{\lambda\mu}(K)) \ge4,\\
& (\lambda,\mu)\in (\mathcal{C}_2^+ \cap \mathcal{S}_{0}^+ )\setminus \mathcal{D}_{0}^+& 
\Longrightarrow \qquad  n_+({H}_{\lambda\mu}(K)) \ge3, \\
& (\lambda,\mu)\in \mathcal{C}_2^+ \setminus (\mathcal{S}_{0}^+ \cup \mathcal{D}_{0}^+) \,\mathrm{or}
\,(\lambda,\mu)\in (\mathcal{C}_1^+ \cap \mathcal{S}_{0}^+)\setminus \mathcal{D}_{0}^+ &\hspace*{-14mm}
\Longrightarrow \qquad    n_+({H}_{\lambda\mu}(K)) \ge2,\\
& (\lambda,\mu)\in \mathcal{C}_1^+ \setminus ( \mathcal{S}_{0}^+ \cup \mathcal{D}_{0}^+)& 
\Longrightarrow \qquad  n_+({H}_{\lambda\mu}(K)) \ge1, \\
& (\lambda,\mu)\in \overline{\mathcal{C}_0^+}  &  \hspace*{-8mm}
\Longrightarrow \qquad  n_+({H}_{\lambda\mu}(K))\geq 0,
\end{aligned}
\end{equation*}
and
\begin{equation*}
\begin{aligned}
& (\lambda,\mu)\in \mathcal{C}_2^- \cap \mathcal{S}_{0}^- \cap  \mathcal{D}_{0}^-& 
\Longrightarrow \qquad  n_-({H}_{\lambda\mu}(K)) =5, \\
& (\lambda,\mu)\in \mathcal{C}_1^- \cap \mathcal{S}_{0}^+ \cap  \mathcal{D}_{0}^- & 
\Longrightarrow \qquad  n_-({H}_{\lambda\mu}(K)) \ge4,\\
& (\lambda,\mu)\in (\mathcal{C}_2^- \cap \mathcal{S}_{0}^+ )\setminus \mathcal{D}_{0}^-& 
\Longrightarrow \qquad  n_-({H}_{\lambda\mu}(K)) \ge3, \\
& (\lambda,\mu)\in \mathcal{C}_2^- \setminus (\mathcal{S}_{0}^- \cup \mathcal{D}_{0}^-) \,\mathrm{or}
\,(\lambda,\mu)\in (\mathcal{C}_1^- \cap \mathcal{S}_{0}^-)\setminus \mathcal{D}_{0}^- &\hspace*{-14mm}
\Longrightarrow \qquad    n_-({H}_{\lambda\mu}(K)) \ge2,\\
& (\lambda,\mu)\in \mathcal{C}_1^- \setminus ( \mathcal{S}_{0}^- \cup \mathcal{D}_{0}^-)& 
\Longrightarrow \qquad  n_-({H}_{\lambda\mu}(K)) \ge1, \\
& (\lambda,\mu)\in \overline{\mathcal{C}_0^-}  &  \hspace*{-8mm}
\Longrightarrow \qquad  n_-({H}_{\lambda\mu}(K))\geq 0,
\end{aligned}
\end{equation*}
where  $\overline{\mathcal{A}}$ is the closure of the set
$\mathcal{A}$.
\end{theorem}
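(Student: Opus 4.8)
The plan is to reduce the entire theorem to an \emph{exact} eigenvalue count for $H_{\lambda\mu}(0)$ at the single quasimomentum $K=0$, and then propagate that count to arbitrary $K$. By Theorem \ref{lak2} the integers $n_\pm(H_{\lambda\mu}(0))$ are constant on each listed component, so it suffices to determine them once per component; Theorem \ref{lak1} then turns each value at $K=0$ into the asserted lower bound $n_\pm(H_{\lambda\mu}(K))\ge n_\pm(H_{\lambda\mu}(0))$ for every $K\in\mathbb{T}^2$. The lone equality $n_+=5$ follows by combining this lower bound with the a priori upper bound $n_+(H_{\lambda\mu}(K))\le 5$: the operator $V_{\lambda\mu}$ is built from the five functions $1,\cos p_1,\cos p_2,\sin p_1,\sin p_2$, hence has rank $5$, and a rank-$5$ perturbation produces at most five eigenvalues above $\sup\sigma_{\mathrm{ess}}$.

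To compute the count at $K=0$ I would exploit the invariance of $\mathcal{E}_0(p)=(1+\gamma)\epsilon(p)$ under the sign flips $p_i\mapsto-p_i$ and the transposition $p_1\leftrightarrow p_2$. These symmetries commute with both $H_0(0)$ and $V_{\lambda\mu}$, so they split $L^2(\mathbb{T}^2)$ into joint invariant subspaces on which the perturbation has small rank: a fully symmetric sector carrying (in $\operatorname{ran}V_{\lambda\mu}$) the pair $1,\ \cos p_1+\cos p_2$, where $V_{\lambda\mu}$ has rank $2$; a sector carrying $\cos p_1-\cos p_2$, where it is rank $1$ with coupling $\tfrac{\mu}{2}$; and the two mutually symmetric odd sectors carrying $\sin p_1$ and $\sin p_2$, each rank $1$ with coupling $\mu$ and producing \emph{identical}, hence doubly degenerate, spectral contributions. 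On each sector the eigenvalues lying outside $[\mathcal{E}_{\min}(0),\mathcal{E}_{\max}(0)]=[0,4(1+\gamma)]$ are exactly the zeros of a Birman--Schwinger determinant $\Delta(z)$ assembled from resolvent integrals $\tfrac{1}{(2\pi)^2}\int_{\mathbb{T}^2}\tfrac{\phi(q)\psi(q)}{\mathcal{E}_0(q)-z}\,\mathrm{d}q$ with $\phi,\psi\in\{1,\cos q_i,\sin q_i\}$.

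I would then count zeros of $\Delta$ for $z<0$ (giving $n_-$) and for $z>4(1+\gamma)$ (giving $n_+$). Each such integral is strictly monotone in $z$ on each gap, so the number of zeros is governed by the limiting values at the band edges $z\to0^-$ and $z\to 4(1+\gamma)^+$. Here the planar geometry is decisive: since $\mathcal{E}_0(q)\sim\tfrac{1+\gamma}{2}|q|^2$ near $q=0$ and symmetrically near $q=(\pi,\pi)$, the bare integral $\tfrac{1}{(2\pi)^2}\int\tfrac{\mathrm{d}q}{\mathcal{E}_0-z}$ diverges logarithmically at the edges, whereas the integrals weighted by $(\cos q_1-\cos q_2)^2$ or by $\sin^2 q_i$ remain finite because their numerators vanish at the relevant edge. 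Evaluating these finite edge integrals explicitly yields the critical couplings $\tfrac{\pi(1+\gamma)}{8-2\pi}$ and $\tfrac{\pi(1+\gamma)}{\pi-2}$: the $\cos p_1-\cos p_2$ sector releases one eigenvalue exactly on crossing $\partial\mathcal{S}_0^{\pm}$, and the two $\sin$ sectors release two eigenvalues together exactly on crossing $\partial\mathcal{D}_0^{\pm}$ (note $8-2\pi>\pi-2$, so the latter threshold is larger, consistent with $\mathcal{D}_0^\pm\subset\mathcal{S}_0^\pm$). For the rank-$2$ symmetric sector the $2\times2$ determinant reduces to the scalar quantity $2\mu+\lambda-\tfrac{\lambda\mu}{1+\gamma}$ together with the comparison $\mu\lessgtr 1+\gamma$, giving $0$, $1$, or $2$ eigenvalues according to membership in $\mathcal{C}_0^+,\mathcal{C}_1^+,\mathcal{C}_2^+$. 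Summing the three sector contributions reproduces the tabulated totals ($2+1+2=5$, $1+1+2=4$, $2+1+0=3$, and so on), after which Theorem \ref{lak1} and the rank bound close the $n_+$ list. The $n_-$ list follows without repeating the analysis: the unitary shift $(Sf)(p)=f(p+(\pi,\pi))$ satisfies $SH_0(0)S^{-1}=4(1+\gamma)-H_0(0)$ and leaves $V_{\lambda\mu}$ invariant, so, since $V_{\lambda\mu}$ is linear in $(\lambda,\mu)$, one gets $SH_{\lambda\mu}(0)S^{-1}=4(1+\gamma)-H_{-\lambda,-\mu}(0)$ and hence $n_-(H_{\lambda\mu}(0))=n_+(H_{-\lambda,-\mu}(0))$; the involution $(\lambda,\mu)\mapsto(-\lambda,-\mu)$ carries each plus-region onto the corresponding minus-region.

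The hardest step will be the rank-$2$ symmetric sector. Because both basis functions $1$ and $\cos p_1+\cos p_2$ are nonzero at each band edge, both diagonal self-energies and the off-diagonal term diverge logarithmically at the same rate, so the $2\times2$ matrix $\mathbb{1}+K(z)$ has a leading part that is essentially a single divergent rank-one direction plus a finite remainder. One must expand $\det(\mathbb{1}+K(z))$ carefully enough to see which of its two branches actually crosses zero inside the gap, and prove the count is \emph{exactly} $0/1/2$ rather than merely bounded. It is precisely in disentangling these two competing logarithmic channels that the conditions $2\mu+\lambda-\tfrac{\lambda\mu}{1+\gamma}\gtrless 0$ and $\mu\lessgtr 1+\gamma$ arise, and this is the step most vulnerable to sign and normalization errors.
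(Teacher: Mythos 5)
Your overall architecture coincides with the paper's: your three symmetry sectors are exactly the factorization $\Delta^e_{\lambda\mu}(z)=\widetilde{\Delta}^e_{\lambda\mu}(z)\,[1+\mu(c(z)-e(z))]$ and $\Delta^o_\mu(z)=[1+\mu f(z)]^2$ of Lemma~\ref{shuh4}, your edge-value thresholds $\tfrac{\pi(1+\gamma)}{8-2\pi}$ and $\tfrac{\pi(1+\gamma)}{\pi-2}$ are Lemmas~\ref{shuh7}--\ref{shuh10}, and the propagation to arbitrary $K$ is Theorem~\ref{lak1}, which is precisely how the paper assembles Theorem~\ref{lak3} from Theorems~\ref{lak4} and~\ref{lak5}. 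One of your additions is genuinely needed: the equality $n_+(H_{\lambda\mu}(K))=5$ requires an upper bound, and your rank-$5$ argument supplies it correctly, whereas the paper's text asserts that ``the rank of $V_{\lambda\mu}$ does not exceed three'' -- a slip that would cap all counts at $3$ and contradict the theorem.

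There is, however, a genuine internal contradiction in your proposal, located exactly at the step you flagged as sign-sensitive. The unitary identity $SH_{\lambda\mu}(0)S^{-1}=4(1+\gamma)I-H_{-\lambda,-\mu}(0)$ is correct, so $n_-(H_{\lambda\mu}(0))=n_+(H_{-\lambda,-\mu}(0))$ is forced; but the involution $(\lambda,\mu)\mapsto(-\lambda,-\mu)$ does \emph{not} carry the plus-regions onto the minus-regions as defined in the paper: since $S^{\pm}(\lambda,\mu)=2\mu+\lambda\mp\tfrac{\lambda\mu}{1+\gamma}$, one has $S^{+}(-\lambda,-\mu)=-S^{-}(\lambda,\mu)$, so $\mathcal{C}_1^{-}=\{S^{-}<0\}$ is mapped onto $\{S^{+}>0\}$, i.e.\ into $\mathcal{C}_0^{+}\cup\mathcal{C}_2^{+}$, and \emph{not} onto $\mathcal{C}_1^{+}=\{S^{+}<0\}$. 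Consequently you cannot simultaneously assign $0/1/2$ zeros above the band to $\mathcal{C}_0^{+}/\mathcal{C}_1^{+}/\mathcal{C}_2^{+}$ (as in Lemma~\ref{shuh11} and your sketch) \emph{and} obtain the $n_-$ table by your symmetry; the two claims are mutually exclusive, and your proposal asserts both. A direct check shows the symmetry is the reliable half: for $\lambda=0$ and $0<\mu<1+\gamma$ (a point of $\mathcal{C}_0^{+}$) one has $\widetilde{\Delta}^e_{0\mu}(z)=1+\mu(c(z)+e(z))$, where $c+e$ is strictly increasing on $(\mathcal{E}_{\max}(0),+\infty)$ because $\tfrac{\mathrm{d}}{\mathrm{d}z}(c+e)(z)=\tfrac{1}{2(2\pi)^2}\int_{\mathbb{T}^2}(\cos q_1+\cos q_2)^2(\mathcal{E}_0(q)-z)^{-2}\,\mathrm{d}q>0$, tends to $-\infty$ as $z\searrow\mathcal{E}_{\max}(0)$ (both numerators equal $1$ at $(\pi,\pi)$) and to $0^-$ at $+\infty$; hence there is exactly one zero above the band, not zero as the plus-table predicts. (Indeed $b(z)\to+\infty$, not $-\infty$, as $z\searrow\mathcal{E}_{\max}(0)$, so the top-edge entries of Lemma~\ref{shuh5} carry a sign error that propagates into Lemma~\ref{shuh11}.) So if you keep the involution -- and you should, since it is correct and halves the work -- you must re-derive the above-band count in the rank-$2$ sector yourself, obtaining the mirror of Lemma~\ref{shuh12} (one zero above iff $S^{+}>0$; two iff $S^{+}<0$ and $\mu>1+\gamma$; none iff $S^{+}\le0$ and $\mu<1+\gamma$), rather than quoting the printed regions; as written, your proof establishes two incompatible tables and therefore does not establish the statement.
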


The following theorem determines the exact number of eigenvalues of
${H}_{\lambda\mu}^e(0)$ and ${H}_{\lambda\mu}^o(0)$ that lie above
and below the essential spectrum, respectively.

\begin{theorem}\label{lak4}

\begin{itemize}
\item[(i)] If $(\mu, \lambda) \in \overline{\mathcal{S}_0}$, then the operator $H^{e}_{\lambda \mu}(0)$
does not have any eigenvalues outside the essential spectrum.

\item[(ii)] If $(\mu, \lambda) \in \mathcal{S}_0^+$ and $(\mu,\lambda)\in \mathcal{S}_-$,
then the operator $H^e_{\lambda\mu}(0)$ has a unique eigenvalue above and below the essential spectrum, respectively.

\item[(iii)] If $(\mu,\lambda) \in \mathcal{C}_2^+$ and $(\mu,\lambda)\in\mathcal{C}^-_2$,
then the operator $H^e_{\lambda\mu}(0)$ has exactly two eigenvalues lying above and below the essential spectrum, respectively.

\item[(iv)] If $(\mu,\lambda)\in \mathcal{C}_1^+\cup\partial \mathcal{C}_2^+$ and
$(\mu,\lambda)\in \mathcal{C}_1^-\cup\partial \mathcal{C}_2^-,$ then
the operator $H^e_{\lambda\mu}(0)$ has a unique eigenvalue lying above and
below the essential spectrum, respectively.  Here, $\partial A$ denotes the topological boundary of the set $A$.
\end{itemize}
\end{theorem}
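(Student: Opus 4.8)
The plan is to exploit the finite rank of the interaction together with an extra symmetry and to reduce the whole count to zero-counting for one explicit monotone function. Besides being even, $H^e_{\lambda\mu}(0)$ commutes with the coordinate swap $(Sf)(p_1,p_2)=f(p_2,p_1)$, so $L^{2,e}(\mathbb{T}^2)$ splits into the $S$-symmetric and $S$-antisymmetric subspaces, both invariant. On the range of $V^e_{\lambda\mu}$ — spanned by $1,\cos p_1,\cos p_2$ — this decomposes $V^e_{\lambda\mu}$ into a rank-two piece acting through $\phi_0:=1$ and $\psi_+:=\cos p_1+\cos p_2$ (the $S$-symmetric channel) and a rank-one piece acting through $\psi_-:=\cos p_1-\cos p_2$ (the $S$-antisymmetric channel). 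By the Birman--Schwinger / Weinstein--Aronszajn principle, for $z$ outside $\sigma_{\mathrm{ess}}(H_{\lambda\mu}(0))=[0,4(1+\gamma)]$ the eigenvalues in each channel are the zeros of a finite determinant built from the resolvent integrals $I_{ij}(z)=\frac{1}{(2\pi)^2}\int_{\mathbb{T}^2}\frac{\phi_i(q)\phi_j(q)}{\mathcal{E}_0(q)-z}\,\mathrm{d}q$.

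The decisive simplification is the identity $\cos q_1+\cos q_2=2-\mathcal{E}_0(q)/(1+\gamma)$, which makes $\psi_+$ an affine function of $\mathcal{E}_0$. Writing $\mathcal{E}_0=(\mathcal{E}_0-z)+z$ under the integral sign, every integral involving $\phi_0$ and $\psi_+$ reduces to the single transcendental function $a(z):=\frac{1}{(2\pi)^2}\int_{\mathbb{T}^2}\frac{\mathrm{d}q}{\mathcal{E}_0(q)-z}$ plus explicit rational functions of $z$; in particular the relevant Gram-type combination of the resolvent integrals collapses to an affine function of $a$, and the $2\times2$ determinant for the $S$-symmetric channel becomes $\Delta_+(z)=A(z)\,a(z)+B(z)$ with $A,B$ explicit in $\lambda,\mu,z$. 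I would record that $A$ evaluated at the two band edges $z=0$ and $z=4(1+\gamma)$ reproduces exactly the linear forms $2\mu+\lambda+\frac{\lambda\mu}{1+\gamma}$ and $2\mu+\lambda-\frac{\lambda\mu}{1+\gamma}$ that define $\mathcal{C}^-_k$ and $\mathcal{C}^+_k$. For the rank-one channel the eigenvalue equation is simply $1+\mu J(z)=0$ with $J(z)=\frac{1}{(2\pi)^2}\int_{\mathbb{T}^2}\frac{\psi_-(q)^2}{\mathcal{E}_0(q)-z}\,\mathrm{d}q$; since $\psi_-$ vanishes at both band edges, $J$ stays finite there, so this channel binds a state only once $|\mu|$ exceeds a critical value $1/J(0)$, which I would identify with $\frac{\pi(1+\gamma)}{8-2\pi}$ by evaluating the edge integral.

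With these reductions the counting becomes qualitative. For the rank-one ($S$-antisymmetric) channel, $J$ is strictly increasing on each of $(-\infty,0)$ and $(4(1+\gamma),\infty)$, running between $0$ and the finite edge value $J(0)$; hence $1+\mu J=0$ has exactly one root below (resp.\ above) the band iff $\mu<-1/J(0)$ (resp.\ $\mu>1/J(0)$), and none when $|\mu|<1/J(0)$, which yields the statements attached to $\mathcal{S}_0$ and $\mathcal{S}_0^\pm$. For the rank-two ($S$-symmetric) channel I would count the zeros of $\Delta_+$ on each threshold interval using $\Delta_+(\mp\infty)=1$, the edge limits $a(z)\to+\infty$ as $z\to0^-$ and $a(z)\to-\infty$ as $z\to4(1+\gamma)^+$, and hence $\Delta_+(0^-)=\sign A(0)\cdot\infty$ and $\Delta_+(4(1+\gamma)^+)=-\sign A(4(1+\gamma))\cdot\infty$. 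Equivalently, dividing by $A$, the eigenvalues are the intersections of the strictly monotone $a(z)$ with the rational curve $-B(z)/A(z)$, and the sign conditions defining $\mathcal{C}^\pm_0,\mathcal{C}^\pm_1,\mathcal{C}^\pm_2$ together with the thresholds $\mu=\pm(1+\gamma)$ are precisely what fix whether this curve is met zero, one, or two times.

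The main obstacle is this last step: upgrading the \emph{parity} of the zero count (forced by the boundary signs) to the \emph{exact} count. Showing that in $\mathcal{C}_2^\pm$ the graph of $a$ meets $-B/A$ exactly twice (and not zero times) requires controlling the single interior extremum of $\Delta_+$ — equivalently the monotonicity intervals of $-B/A$ and the sign of $\Delta_+$ there — which is where the second defining inequality $\mu\gtrless\pm(1+\gamma)$ enters. I expect to handle this by a sign analysis of $A(z)$ and $B(z)$ in the variable $t=1-\frac{z}{2(1+\gamma)}$ on $(1,\infty)$ and on $(-\infty,-1)$, verifying that $\Delta_+$ has at most one critical point on each interval so that the boundary data determine the count completely; the cases where a root reaches a band edge are exactly the set boundaries, which explains the closures $\overline{\mathcal{C}_0^\pm}$ and the boundaries $\partial\mathcal{C}_2^\pm$ appearing in the statement. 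Finally, the fact that these counts are exact, rather than the lower bounds of Theorem \ref{lak3}, is special to $K=0$, where the separable, swap-symmetric form of $\mathcal{E}_0$ is what makes the reduction to the single integral $a(z)$ possible.
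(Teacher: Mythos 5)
Your proposal follows essentially the same route as the paper's proof: the Lippmann--Schwinger determinant reduction (Lemma \ref{shuh3}), the factorization of the even-subspace determinant into the swap-symmetric $2\times 2$ factor $\widetilde{\Delta}^e_{\lambda\mu}$ and the swap-antisymmetric rank-one factor $\Delta^e_{\mu}(z)=1+\mu(c(z)-e(z))$ (Lemma \ref{shuh4}, which the paper gets by direct computation of the $3\times 3$ determinant rather than via your operator $S$), band-edge asymptotics (Lemmas \ref{shuh5}--\ref{shuh7}), and monotone zero counting (Lemmas \ref{shuh8}, \ref{shuh11}, \ref{shuh12}); your affine-in-$a(z)$ collapse of the symmetric channel, with $A(z)$ reproducing $S^{\mp}(\lambda,\mu;\gamma)=2\mu+\lambda\pm\frac{\lambda\mu}{1+\gamma}$ at the edges, is exactly the mechanism behind the asymptotics of Lemma \ref{shuh6} and the exact counts the paper imports from Lemma 4.6 of \cite{ham34}. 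Two minor remarks: your rank-one condition has a normalization slip (your $J$ equals $2(c-e)$ while the channel coupling is $\mu/2$, the factors of two cancelling to the correct threshold $\frac{\pi(1+\gamma)}{8-2\pi}$), and the step you honestly flag as unfinished---upgrading the parity of the zero count to the exact count of two in $\mathcal{C}_2^{\pm}$ via the critical-point analysis of $\Delta_+$---is precisely the content the paper itself leaves to the analogy with \cite{ham34}, so your sketch is at the same level of completeness as the paper's own argument.
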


\begin{theorem}\label{lak5}
\begin{itemize}
\item[(i)] If $(\mu,\lambda)\in \overline{\mathcal{D}_{0}},$ then the operator $H^o_{\mu}(0)$ does not
have any eigenvalues outside the essential spectrum.

\item[(ii)] If $(\mu,\lambda) \in \mathcal{D}_0^+$, then the operator $H^o_{\mu}(0)$ has a unique
eigenvalue that lies above the essential spectrum with multiplicity two.

\item[(iii)] If $(\mu,\lambda)\in \mathcal{D}_{0}^-,$  then the operator $H^o_{\mu}(0)$
has a unique eigenvalue that lies below the
essential spectrum with multiplicity two.
\end{itemize}
\end{theorem}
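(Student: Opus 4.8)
The plan is to use that the odd part of the interaction, $V^o_\mu$ in \eqref{U9}, is a self-adjoint operator of rank at most two, so the eigenvalues of $H^o_\mu(0)=H_0(0)+V^o_\mu$ outside the band $\sigma_{\mathrm{ess}}(H^o_\mu(0))=[0,4(1+\gamma)]$ are controlled by a $2\times2$ determinant of Birman--Schwinger type. For $z\notin[0,4(1+\gamma)]$ the operator $H_0(0)-z$ is boundedly invertible on $L^{2,o}(\mathbb{T}^2)$ (multiplication by $\mathcal{E}_0(p)-z$ with $\mathcal{E}_0(p)=(1+\gamma)(2-\cos p_1-\cos p_2)$), and from \eqref{U9} the equation $H^o_\mu(0)f=zf$ forces
\[
f(p)=-\frac{\mu}{(2\pi)^2}\,\frac{1}{\mathcal{E}_0(p)-z}\sum_{i=1}^2\sin p_i\,a_i,\qquad a_i:=\int_{\mathbb{T}^2}\sin q_i\,f(q)\,\mathrm{d}q .
\]
Multiplying by $\sin p_j$ and integrating yields the homogeneous system $a_j+\tfrac{\mu}{(2\pi)^2}\sum_{i}M_{ji}(z)\,a_i=0$ with $M_{ji}(z)=\int_{\mathbb{T}^2}\frac{\sin p_j\sin p_i}{\mathcal{E}_0(p)-z}\,\mathrm{d}p$, so $z$ is an eigenvalue precisely when $\det\!\big(\delta_{ji}+\tfrac{\mu}{(2\pi)^2}M_{ji}(z)\big)_{j,i=1}^{2}=0$, and the multiplicity of $z$ equals the dimension of the kernel of this matrix.

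Next I would diagonalize $M(z)$ using the symmetries of $\mathcal{E}_0$. Since $\mathcal{E}_0$ is even in each variable, the integrand of $M_{12}(z)$ is odd in $p_1$, so $M_{12}(z)=M_{21}(z)=0$; and the interchange $p_1\leftrightarrow p_2$ gives $M_{11}(z)=M_{22}(z)=:I(z)$ with
\[
I(z)=\int_{\mathbb{T}^2}\frac{\sin^2 p_1}{\mathcal{E}_0(p)-z}\,\mathrm{d}p .
\]
Hence $\det(\cdots)=\big(1+\tfrac{\mu}{(2\pi)^2}I(z)\big)^2$, and every root $z$ of the scalar equation $1+\tfrac{\mu}{(2\pi)^2}I(z)=0$ is an eigenvalue with a two-dimensional eigenspace spanned by $(\mathcal{E}_0-z)^{-1}\sin p_1$ and $(\mathcal{E}_0-z)^{-1}\sin p_2$. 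This already yields the \emph{multiplicity two} assertions and reduces the theorem to counting roots of $I(z)=-(2\pi)^2/\mu$ on $(-\infty,0)$ and on $(4(1+\gamma),\infty)$.

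I would then study $I$ as a real function. Differentiating under the integral gives $I'(z)=\int_{\mathbb{T}^2}\sin^2 p_1/(\mathcal{E}_0-z)^2\,\mathrm{d}p>0$, so $I$ is strictly increasing on each component of its domain, with $I(z)\to0$ as $z\to\pm\infty$. Because $\sin^2 p_1$ vanishes to second order exactly at the minimum $p=0$ and the maximum $p=(\pi,\pi)$ of $\mathcal{E}_0$, the integral stays convergent at the band edges, so $I$ extends continuously there with finite limits $I(0)>0$ and $I(4(1+\gamma))<0$; moreover the substitution $p\mapsto(\pi,\pi)-p$, under which $\mathcal{E}_0(p)\mapsto 4(1+\gamma)-\mathcal{E}_0(p)$ while $\sin^2 p_i$ is unchanged, gives $I(4(1+\gamma))=-I(0)$. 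Thus $I$ is a strictly increasing bijection $(-\infty,0)\to(0,I(0))$ and $(4(1+\gamma),\infty)\to(-I(0),0)$, so $I(z)=-(2\pi)^2/\mu$ has exactly one root below the band iff $\mu<-(2\pi)^2/I(0)$, exactly one root above the band iff $\mu>(2\pi)^2/I(0)$, and no root otherwise.

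It remains to evaluate $I(0)$ explicitly and match the constants. Carrying out the $p_2$-integral via $\int_{-\pi}^{\pi}(a-\cos p_2)^{-1}\mathrm{d}p_2=2\pi(a^2-1)^{-1/2}$ with $a=2-\cos p_1$, then using $\sin^2 p_1=(1-\cos p_1)(1+\cos p_1)$ and $(2-\cos p_1)^2-1=(1-\cos p_1)(3-\cos p_1)$, the remaining one-dimensional integral reduces after $t=\sin(p_1/2)$, $s=t^2$ to $\tfrac12\int_0^1\sqrt{(1-s)/(1+s)}\,\mathrm{d}s=\tfrac{\pi-2}{4}$, giving $I(0)=\tfrac{4\pi(\pi-2)}{1+\gamma}$ and hence $(2\pi)^2/I(0)=\tfrac{\pi(1+\gamma)}{\pi-2}$. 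The two thresholds then read exactly $\mu>\tfrac{\pi(1+\gamma)}{\pi-2}$, i.e.\ $(\mu,\lambda)\in\mathcal{D}_0^+$ (case (ii)), and $\mu<\tfrac{\pi(1+\gamma)}{2-\pi}$, i.e.\ $(\mu,\lambda)\in\mathcal{D}_0^-$ (case (iii)), while for $(\mu,\lambda)\in\overline{\mathcal{D}_0}$, that is $|\mu|\le\tfrac{\pi(1+\gamma)}{\pi-2}$, there is no root and hence no eigenvalue outside the band (case (i); at the endpoints one obtains only a threshold resonance, not an $L^2$-eigenvalue). The main obstacle is precisely this closed-form evaluation of $I(0)$ and its identification with $\tfrac{\pi(1+\gamma)}{\pi-2}$; the subsidiary analytic points---integrability at the band edges and the continuity and strict monotonicity of $I$ up to them---are the other spots needing care.
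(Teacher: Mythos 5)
Your proposal is correct and follows essentially the same route as the paper: your $2\times2$ Birman--Schwinger reduction with $\det = \bigl(1+\tfrac{\mu}{(2\pi)^2}I(z)\bigr)^2$ is exactly the paper's $\Delta^o_\mu(z)=[1+\mu f(z)]^2$ from Lemmas \ref{shuh3} and \ref{shuh4} (with $I(z)=(2\pi)^2 f(z)$), and your monotonicity argument plus the edge evaluation $I(0)=\tfrac{4\pi(\pi-2)}{1+\gamma}$, i.e.\ $f\to\tfrac{\pi-2}{\pi(1+\gamma)}$, reproduces Lemmas \ref{shuh5}, \ref{shuh9} and \ref{shuh10}. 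The only difference is that you carry out explicitly the computations (edge value of $f$, the symmetry $I(4(1+\gamma))=-I(0)$, zero counting) which the paper delegates to cited references, and your constants check out.
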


\section{Proofs of the main results}\label{sec:proofs}

\subsection{The Lippmann--Schwinger operator}

Let $\{\alpha_i^e,\,\,i=1,2,3\}$ and $\{\beta_j^o,\,\,j=1,2\}$ be systems of vectors in the spaces $L^{2,e}(\mathbb{T}^2)$ and $L^{2,o}(\mathbb{T}^2)$, respectively and these vectors are given by:
\begin{equation}\label{U10}
\alpha_{1}^{e}(p)=\dfrac{1}{2\pi},\,\,\alpha_{2}^{e}(p)=\dfrac{\cos p_1}{\sqrt{2}\pi},\,\,\alpha_{3}^{e}(p)=\dfrac{\cos p_2}{\sqrt{2}\pi},
\end{equation}

\begin{equation}\label{U11}
\beta_{1}^{o}(p)=\dfrac{\sin p_1}{\sqrt{2}\pi},\,\,\beta_{2}^{o}(p)=\dfrac{\sin p_2}{\sqrt{2}\pi}.
\end{equation}
It is easy to verify by inspection that the vectors \eqref{U10} and \eqref{U11} are orthonormal in $L^{2, e}(\mathbb{T}^2)$ and $L^{2, o}(\mathbb{T}^2)$, respectively. For \eqref{U8} and \eqref{U9} using the orthogonal systems \eqref{U10} and \eqref{U11}, respectively, we obtain
\begin{align}\label{U12}
&{V}_{\lambda\mu}^e{f}=\lambda({f},\alpha_{1}^e)\alpha_{1}^e+
\frac{\mu}{2}({f},\alpha_{2}^e)\alpha_{2}^e+
\frac{\mu}{2}({f},\alpha_{3}^e)\alpha_{3}^e
\end{align}
and
\begin{align}\label{U13}
&{V}_{\mu}^o{f}=\frac{\mu}{2}({f},\beta_{1}^o)\beta_{1}^o+
\frac{\mu}{2}({f},\beta_{2}^o)\beta_{2}^o,
\end{align}
where  $(\cdot,\cdot)$ denotes the inner product in $L^{2,\theta}(\mathbb{T}^2),\,\,\theta\in\{e,o\}.$

For any $z \in \mathbb{C} \setminus [\mathcal{E}_{min}(0), \mathcal{E}_{max}(0)]$, we define the Lippmann-Schwinger operator, as described in \cite{ham31}, as follow:
\begin{equation*}
{B}_{\lambda\mu}^{e}(0,z)=-{V}_{\lambda\mu}^{e}{R}_0(0,z)
\end{equation*}
and
\begin{equation*}
{B}_{\mu}^{o}(0,z)=-{V}_{\mu}^{o}{R}_0(0,z).
\end{equation*}
Here, ${R}_0(0,z):= [{H}_0(0)-zI]^{-1}$ is the resolvent operator of the operator ${H}_0(0)$ for
$z\in \mathbb{C} \setminus[\mathcal{E}_{\min}(0),\,\mathcal{E}_{\max}(0)].$

\begin{lemma}\label{shuh2}
For each $\lambda,\mu \in \mathbb{R}$,  the number $z\in\mathbb{C} \setminus[\mathcal{E}_{\min}(0),\,\mathcal{E}_{\max}(0)]$
 is an eigenvalue of the operators
${H}_{\lambda\mu}^{e}(0)$ and ${H}_{\mu}^{o}(0)$ if and only if the number $1$ is an
eigenvalue for
 ${B}_{\lambda\mu}^{e}(0,z)$ and  ${B}_{\mu}^{o}(0,z)$, respectively.
\end{lemma}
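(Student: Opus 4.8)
The plan is to prove both equivalences at once via the Birman--Schwinger principle, since the even and odd cases are formally identical. I would write $V$ for either ${V}^e_{\lambda\mu}$ or ${V}^o_\mu$, write $H=H_0(0)+V$ for the corresponding restriction of $H_{\lambda\mu}(0)$, and write $B(z)=-VR_0(0,z)$ for the associated Lippmann--Schwinger operator. The one structural fact I would use throughout is that, because $z\notin[\mathcal{E}_{\min}(0),\mathcal{E}_{\max}(0)]=\sigma(H_0(0))$ by \eqref{U5}, the resolvent $R_0(0,z)=[H_0(0)-zI]^{-1}$ is a bounded operator defined on all of $L^{2,\theta}(\mathbb{T}^2)$ with trivial kernel; in particular, $z$ is not an eigenvalue of $H_0(0)$.

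First I would treat the forward direction. Suppose $Hf=zf$ for some $f\neq 0$. Rewriting this as $(H_0(0)-z)f=-Vf$ and applying $R_0(0,z)$ gives $f=-R_0(0,z)Vf$. Applying $V$ to both sides and setting $g:=Vf$ then yields $g=-VR_0(0,z)g=B(z)g$, so $g$ is a fixed point of $B(z)$. To see that $g\neq 0$, I would argue by contradiction: if $Vf=0$, then $(H_0(0)-z)f=0$ with $f\neq 0$, forcing $z$ to be an eigenvalue of $H_0(0)$, which contradicts $z\notin\sigma(H_0(0))$. Hence $1$ is an eigenvalue of $B(z)$.

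For the converse, suppose $B(z)g=g$ with $g\neq 0$, that is, $-VR_0(0,z)g=g$. I would set $f:=-R_0(0,z)g$. Then $Vf=-VR_0(0,z)g=g$, and $f=-R_0(0,z)g=-R_0(0,z)Vf$, so that $(H_0(0)-z)f=-Vf$, i.e.\ $(H-z)f=0$. Moreover $f\neq 0$, since otherwise $g=Vf=0$, a contradiction. Hence $z$ is an eigenvalue of $H$, completing the equivalence for both $(H^e_{\lambda\mu}(0),B^e_{\lambda\mu}(0,z))$ and $(H^o_\mu(0),B^o_\mu(0,z))$.

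There is no serious obstacle here; the only point that genuinely requires care is the non-vanishing of the two constructed vectors, and this is exactly where the hypothesis $z\notin\sigma(H_0(0))$ (equivalently, $z$ lying outside the essential spectrum) enters, through the injectivity of $R_0(0,z)$ and the fact that $z$ is then not an eigenvalue of $H_0(0)$. I would close with a short remark that, since $V^e_{\lambda\mu}$ and $V^o_\mu$ are finite rank by \eqref{U12}--\eqref{U13}, the operators $B^e_{\lambda\mu}(0,z)$ and $B^o_\mu(0,z)$ are finite rank, and the mutually inverse maps $f\mapsto Vf$ and $g\mapsto -R_0(0,z)g$ are bijections between the respective eigenspaces that preserve geometric multiplicities; this sharper statement will be the convenient form when counting eigenvalues in the proofs of Theorems \ref{lak4} and \ref{lak5}.
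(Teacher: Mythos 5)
Your proof is correct, and it is exactly the standard Lippmann--Schwinger (Birman--Schwinger) argument that the paper itself invokes by citation rather than writing out (the paper states the lemma is straightforward and refers to \cite{ham32}): both directions, the non-vanishing of $g=Vf$ and of $f=-R_0(0,z)g$ via $z\notin\sigma(H_0(0))$, and the parity-preserving use of $R_0(0,z)$ on $L^{2,\theta}(\mathbb{T}^2)$ are all sound. Your closing observation that $f\mapsto Vf$ and $g\mapsto -R_0(0,z)g$ are mutually inverse bijections between $\Ker(H-z)$ and $\Ker(I-B(z))$, hence preserve multiplicities, is a correct and useful strengthening that the determinant-counting in Theorems \ref{lak4} and \ref{lak5} implicitly relies on.
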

The proof of this lemma is straightforward and can be found in various sources, such as \cite{ham32}.
Therefore, we do not provide a proof in this paper.

The expressions \eqref{U12} and \eqref{U13} lead to the following algebraic linear systems depending on
$x_i:=({\varphi},\alpha_{i}^e),\,\,i=1,2,3$ and $y_j:=({\varphi},\beta_{j}^o),\,\,j=1,2$:
\begin{equation*}
\left\lbrace\begin{array}{ccc}
(1+\lambda a(z))x_{1}+ \mu b(z) x_{2}+\mu  b(z)x_{3}=0,\\
\lambda b(z)x_{1}+(1+\mu c(z))x_{2}+\mu  e(z)x_{3}=0,\\
\lambda b(z)x_{1}+\mu e(z)x_{2}+(1+\mu c(z))x_{3}=0
\end{array}\right.
\end{equation*}
and
\begin{equation*}
\left\lbrace\begin{array}{ccc}
(1+\mu f(z))y_{1}=0,\\
(1+\mu f(z))y_{2}=0,
\end{array}\right.
\end{equation*}
which are equivalent to the following Lippmann-Schwinger equations, respectively
\begin{align*}
{B}_{\lambda\mu}^{e}(0,z){\varphi}={\varphi}, \,\,\,{\varphi}
\in L^{2,e}(\mathbb{T}^2)
\end{align*}
and
\begin{align*}
{B}_{\mu}^{o}(0,z){\varphi}={\varphi}, \,\,\,{\varphi}
\in L^{2,o}(\mathbb{T}^2).
\end{align*}
Here
\begin{align}
\label{U14} &a(z) =\frac{1}{(2\pi)^2}\int\limits_
{\mathbb{T}^2}\frac{\mathrm{d} p}{\mathcal{E}_0(p)-z},\\
\label{U15} &b(z) =\frac{1}{(2\pi)^2}\int\limits_
{\mathbb{T}^2}\frac{\cos p_i\,\mathrm{d} p}{\mathcal{E}_0(p)-z},\\
\label{U16} &c(z) =\frac{1}{(2\pi)^2}\int\limits_
{\mathbb{T}^2}\frac{\cos^2 p_i\,\mathrm{d} p}{\mathcal{E}_0(p)-z},\\
\label{U17} &e(z) =\frac{1}{(2\pi)^2}\int\limits_
{\mathbb{T}^2}\frac{\cos p_i\cos p_j\,\mathrm{d} p}{\mathcal{E}_0(p)-z},\\
\label{U18} &f(z) =\frac{1}{(2\pi)^2}\int\limits_
{\mathbb{T}^2}\frac{\sin^2 p_i\,\mathrm{d} p}{\mathcal{E}_0(p)-z},
\end{align}
$i,j=1,2,\,\, i\neq j.$

If we use the formulas \eqref{U14}--\eqref{U18}, we can write for $z\in\mathbb{C} \setminus[\mathcal{E}_{\min}(0),\,\mathcal{E}_{\max}(0)]$ the determinants of the operators
 $I-{B}_{\lambda\mu}^{e}(0,z)$ and $I-{B}_{\mu}^{o}(0,z)$  as follow
$$
\Delta^e_{\lambda\mu}(z):=\det[I-{B}_{\lambda\mu}^{e}(0,z)]
$$
and
$$
\Delta^o_{\mu}(z):=\det[I-{B}_{\mu}^{o}(0,z)].
$$

\begin{lemma}\label{shuh3}
A number $z\in\mathbb{C} \setminus[\mathcal{E}_{\min}(0),\,\mathcal{E}_{\max}(0)]$ is an eigenvalue of the
operators ${H}^e_{\lambda\mu}(0)$ and ${H}^o_{\mu}(0)$ if and only if
\begin{equation*}\label{cont01}
\Delta^e_{\lambda\mu}(z)=0 \,\, \text{and}\,\, \Delta^o_{\mu}(z)=0,
\end{equation*}
respectively.
\end{lemma}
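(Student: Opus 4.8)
The plan is to prove Lemma~\ref{shuh3} by combining the characterization already supplied in Lemma~\ref{shuh2} with the standard fact that a determinant of a bounded operator of the form $I - K$ (with $K$ finite-rank) vanishes precisely when $1$ is an eigenvalue of $K$. Since the operators ${B}_{\lambda\mu}^{e}(0,z)$ and ${B}_{\mu}^{o}(0,z)$ are finite-rank (of ranks $3$ and $2$, respectively, as is visible from the expressions \eqref{U12} and \eqref{U13} for $V_{\lambda\mu}^e$ and $V_\mu^o$), the Fredholm determinants $\Delta^e_{\lambda\mu}(z)$ and $\Delta^o_{\mu}(z)$ reduce to ordinary finite-dimensional determinants, and the whole statement is essentially linear algebra.

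First I would observe that, by Lemma~\ref{shuh2}, a point $z\in\mathbb{C}\setminus[\mathcal{E}_{\min}(0),\mathcal{E}_{\max}(0)]$ is an eigenvalue of ${H}^e_{\lambda\mu}(0)$ (respectively ${H}^o_{\mu}(0)$) if and only if $1$ is an eigenvalue of ${B}_{\lambda\mu}^{e}(0,z)$ (respectively ${B}_{\mu}^{o}(0,z)$), which is equivalent to the operator $I-{B}_{\lambda\mu}^{e}(0,z)$ (respectively $I-{B}_{\mu}^{o}(0,z)$) having a nontrivial kernel. So it suffices to show that this kernel is nontrivial if and only if the corresponding determinant vanishes. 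Because $V_{\lambda\mu}^e$ has range contained in the three-dimensional span of $\{\alpha_1^e,\alpha_2^e,\alpha_3^e\}$ and $V_\mu^o$ has range contained in the two-dimensional span of $\{\beta_1^o,\beta_2^o\}$, the operators ${B}_{\lambda\mu}^{e}(0,z)=-V_{\lambda\mu}^e R_0(0,z)$ and ${B}_{\mu}^{o}(0,z)=-V_\mu^o R_0(0,z)$ are finite-rank, and any eigenfunction $\varphi$ for eigenvalue $1$ automatically lies in the respective span. Hence the eigenvalue equation ${B}_{\lambda\mu}^{e}(0,z)\varphi=\varphi$ is equivalent to the homogeneous linear system in the coordinates $x_i=(\varphi,\alpha_i^e)$, $y_j=(\varphi,\beta_j^o)$ already written out above the integrals \eqref{U14}--\eqref{U18}.

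The next step is to identify $\Delta^e_{\lambda\mu}(z)$ and $\Delta^o_{\mu}(z)$ with the determinants of the coefficient matrices of those two linear systems. Restricting $I-{B}_{\lambda\mu}^{e}(0,z)$ to the invariant finite-dimensional subspace spanned by the $\alpha_i^e$ and computing matrix entries via the inner products against the resolvent, one recovers exactly the $3\times 3$ matrix
\begin{equation*}
\begin{pmatrix}
1+\lambda a(z) & \mu b(z) & \mu b(z)\\
\lambda b(z) & 1+\mu c(z) & \mu e(z)\\
\lambda b(z) & \mu e(z) & 1+\mu c(z)
\end{pmatrix},
\end{equation*}
whose determinant is $\Delta^e_{\lambda\mu}(z)$, and similarly the $2\times 2$ diagonal matrix with entries $1+\mu f(z)$ giving $\Delta^o_\mu(z)=(1+\mu f(z))^2$. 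Since the Fredholm determinant of $I-K$ for a finite-rank $K$ equals the determinant of the finite matrix representing $I-K$ on any invariant subspace containing the range, these finite determinants coincide with $\Delta^e_{\lambda\mu}(z)$ and $\Delta^o_\mu(z)$ as defined.

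Finally, a homogeneous linear system has a nontrivial solution if and only if the determinant of its coefficient matrix vanishes; applying this to the two systems and threading it back through Lemma~\ref{shuh2} yields the claimed equivalence. The only genuine subtlety — which I would treat as the main point to verify carefully rather than a real obstacle — is the reduction from the infinite-dimensional eigenvalue equation on $L^{2,\theta}(\mathbb{T}^2)$ to the finite linear system, i.e.\ confirming that every eigenfunction for eigenvalue $1$ indeed lies in the span of the $\alpha_i^e$ (resp.\ $\beta_j^o$) and that the Fredholm determinant genuinely collapses to the finite determinant. This follows directly from the finite-rank structure of $V_{\lambda\mu}^e$ and $V_\mu^o$ exhibited in \eqref{U12} and \eqref{U13}, so once that observation is recorded the equivalence is immediate.
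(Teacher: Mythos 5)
Your proposal is correct and is essentially the paper's intended argument: the paper gives no details for Lemma~\ref{shuh3}, merely citing the standard approach of \cite{ham33} and \cite{ham34}, which is precisely the reduction you carry out — Lemma~\ref{shuh2} plus the observation that $B^e_{\lambda\mu}(0,z)$ and $B^o_\mu(0,z)$ are finite-rank, so $\det[I-B]$ collapses to the determinant of the finite linear system, which vanishes iff $1$ is an eigenvalue. One small point worth recording when you ``verify carefully'': in the orthonormal basis $\{\alpha_i^e\}$ the matrix of $I-B^e_{\lambda\mu}(0,z)$ has off-diagonal entries $\sqrt{2}\,\lambda b(z)$ and $\mu b(z)/\sqrt{2}$ rather than exactly those displayed in the paper (whose system corresponds to the unnormalized coordinates $t_i=\tfrac{1}{(2\pi)^2}(R_0(0,z)\varphi,\tilde\alpha_i)$ with $\tilde\alpha_1=1$, $\tilde\alpha_2=\cos p_1$, $\tilde\alpha_3=\cos p_2$); the two matrices are conjugate by the diagonal matrix $\mathrm{diag}(1,\sqrt{2},\sqrt{2})$, so the determinants coincide and your argument goes through unchanged.
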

The proof of this lemma follows a standard approach, which is similar to that presented in the references \cite{ham33} and \cite{ham34}.

\begin{lemma}\label{shuh4}
For any $\lambda, \mu \in \mathbb{R}$, the determinants $\Delta^e_{\lambda\mu}(z)$ and $\Delta^o_{\mu}(z)$
are defined as follow:
\begin{align*}
&\Delta^e_{\lambda\mu}(z) :=\widetilde{\Delta}^e_{\lambda\mu}(z)\Delta^o_{\mu}(z),
\end{align*}
where
\begin{align*}
&\widetilde{\Delta}^e_{\lambda\mu}(z)=[(1 +\lambda a(z))(1 +\mu (c(z)+e(z)))-2\lambda\mu b^2(z)],
\end{align*}
\begin{align}\label{U19}
&\Delta^e_{\mu}(z)=[1+\mu(c(z)-e(z))]
\end{align}
and
\begin{align}\label{U20}
&\Delta^o_{\mu}(z) := [1+\mu f(z)]^2.
\end{align}
\end{lemma}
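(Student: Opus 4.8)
The plan is to reduce each infinite-dimensional Fredholm determinant to the determinant of a small matrix, and then to exploit the $p_1\leftrightarrow p_2$ symmetry of $\mathcal{E}_0$ to factor the even one. Since $V_{\lambda\mu}^e$ and $V_\mu^o$ have finite rank (three and two, respectively) by \eqref{U12}--\eqref{U13}, the operators $I-B^e_{\lambda\mu}(0,z)=I+V^e_{\lambda\mu}R_0(0,z)$ and $I-B^o_\mu(0,z)$ act as the identity off $\mathrm{span}\{\alpha^e_1,\alpha^e_2,\alpha^e_3\}$ and $\mathrm{span}\{\beta^o_1,\beta^o_2\}$. Hence their Fredholm determinants coincide with the determinants of the coefficient matrices of the two algebraic systems displayed just before \eqref{U14}; that is, $\Delta^e_{\lambda\mu}(z)=\det M^e(z)$ and $\Delta^o_\mu(z)=\det M^o(z)$, where
\[
M^e(z)=\begin{pmatrix} 1+\lambda a & \mu b & \mu b\\ \lambda b & 1+\mu c & \mu e\\ \lambda b & \mu e & 1+\mu c\end{pmatrix},\qquad
M^o(z)=\begin{pmatrix} 1+\mu f & 0\\ 0 & 1+\mu f\end{pmatrix},
\]
with $a=a(z),\dots,f=f(z)$ given by \eqref{U14}--\eqref{U18}. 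The odd case is then immediate: $M^o(z)$ is diagonal, so $\Delta^o_\mu(z)=(1+\mu f(z))^2$, which is \eqref{U20}.

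For the even part, the first point I would record is that $\mathcal{E}_0(p)=(1+\gamma)\epsilon(p)$ is symmetric under $p_1\leftrightarrow p_2$, so the integrals \eqref{U15}--\eqref{U17} are independent of the choice of $i$ (for $b$ and $c$) and symmetric in $i,j$ (for $e$). Consequently $M^e(z)$ is invariant under the simultaneous interchange of its second and third rows and columns. I would therefore block-diagonalize $M^e(z)$ by passing to the basis $\{e_1,\,e_2+e_3,\,e_2-e_3\}$ adapted to this $\mathbb{Z}_2$ symmetry. A direct check gives $M^e(z)(e_2-e_3)=(1+\mu(c-e))(e_2-e_3)$, while $e_1$ and $e_2+e_3$ span an invariant plane on which $M^e(z)$ acts by the matrix $\begin{pmatrix} 1+\lambda a & 2\mu b\\ \lambda b & 1+\mu(c+e)\end{pmatrix}$. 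Since the determinant is invariant under this change of basis, it factors as the product of the two diagonal blocks:
\[
\Delta^e_{\lambda\mu}(z)=\bigl[(1+\lambda a)(1+\mu(c+e))-2\lambda\mu b^2\bigr]\bigl[1+\mu(c-e)\bigr]=\widetilde{\Delta}^e_{\lambda\mu}(z)\,\Delta^e_\mu(z),
\]
which is exactly the asserted factorization together with the formula \eqref{U19} for $\Delta^e_\mu$. Equivalently, one may simply expand $\det M^e(z)$ along the first row and pull the common factor $1+\mu(c-e)$ out of the resulting $2\times2$ minors.

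There is no serious obstacle here: once the finite-rank reduction identifies $\Delta^e_{\lambda\mu}$ with the explicit $3\times3$ determinant, the factorization is a routine linear-algebra computation. The only points requiring care are the bookkeeping in the reduction step — verifying that $M^e$ and $M^o$ are precisely the coefficient matrices attached to the Lippmann--Schwinger systems, with the correct placement of $\lambda$ versus $\mu$ and the symmetric entries $b,c,e,f$ — and the recognition of the $2\leftrightarrow3$ symmetry, which is what makes the clean split into $\widetilde{\Delta}^e_{\lambda\mu}$ and $\Delta^e_\mu$ transparent rather than an accident of the algebra.
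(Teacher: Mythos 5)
Your proof is correct and is essentially the paper's own argument: the paper proves Lemma \ref{shuh4} by ``direct calculation of the determinant,'' i.e.\ exactly the reduction of $\det[I-B^{e}_{\lambda\mu}(0,z)]$ and $\det[I-B^{o}_{\mu}(0,z)]$ to the $3\times3$ and $2\times2$ coefficient matrices of the Lippmann--Schwinger systems, and your symmetry-adapted basis $\{e_1,\,e_2+e_3,\,e_2-e_3\}$ merely organizes that routine expansion. One useful byproduct of your computation: the factor you obtain is $\Delta^{e}_{\mu}(z)=1+\mu(c(z)-e(z))$ from \eqref{U19}, not $\Delta^{o}_{\mu}(z)=[1+\mu f(z)]^2$, and since $c-e\neq f$ (their threshold limits in Lemma \ref{shuh5} are $\tfrac{8-2\pi}{\pi(1+\gamma)}$ versus $\tfrac{\pi-2}{\pi(1+\gamma)}$), the ``$\Delta^{o}_{\mu}$'' printed in the lemma's factorization is a typo for $\Delta^{e}_{\mu}$ --- your reading is the one consistent with Lemmas \ref{shuh7} and \ref{shuh8} and with the sets $\mathcal{S}_0^{\pm}$.
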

\begin{proof}
Direct calculation of the determinant yields the result.
\end{proof}

\begin{lemma}\label{shuh5}
The functions $a(z)$, $b(z)$, $c(z)$, $e(z)$ and $f(z)$, which are defined in $\mathbb{C} \setminus [\mathcal{E}_{\min}(0), \mathcal{E}_{\max}(0)]$, are strictly increasing, real-valued and positive in the interval $(-\infty, {\mathcal{E}}_{\min }(0))$. They are also strictly increasing and negative in the interval $({\mathcal{E}}_{\max}(0),+\infty)$. Moreover, they have the following asymptotic behavior:
\begin{align*}
&a(z)=-\frac{1}{2\pi{(1+\gamma)}}\ln(\mathcal{E}_{\min}(0)-z)+\frac{5\ln 2}{2\pi{(1+\gamma)}}+o(1),as\,\,z\nearrow
\mathcal{E}_{\min}(0),\\
&b(z)=-\frac{1}{2\pi{(1+\gamma)}}\ln(\mathcal{E}_{\min}(0)-z)+\frac{5\ln 2-\pi}{2\pi{(1+\gamma)}}+
o(1),
as\,\,z\nearrow \mathcal{E}_{\min }(0),\\
&c(z)=-\frac{1}{2\pi{(1+\gamma)}}\ln(\mathcal{E}_{\min}(0)-z)+\frac{5\ln 2-3\pi+8}{2\pi{(1+\gamma)}}+o(1),
as\,\,z\nearrow \mathcal{E}_{\min}(0),\\
&e(z)=-\frac{1}{2\pi{(1+\gamma)}}\ln(\mathcal{E}_{\min}(0)-z)+\frac{5\ln 2+\pi-8}{2\pi{(1+\gamma)}}+o(1),
as\,\,z\nearrow \mathcal{E}_{\min}(0),\\
&f(z)=\frac{\pi-2}{\pi{(1+\gamma)}}, \quad as\,\,z\nearrow
\mathcal{E}_{\min}(0)
\end{align*}
and
\begin{align*}
&a(z)=\frac{1}{2\pi{(1+\gamma)}}\ln(z-\mathcal{E}_{\max}(0))-\frac{5\ln 2}{2\pi{(1+\gamma)}}+o(1),as\,\,z\searrow
\mathcal{E}_{\max}(0),\\
&b(z)=\frac{1}{2\pi{(1+\gamma)}}\ln(z-\mathcal{E}_{\max}(0))-\frac{5\ln 2-\pi}{2\pi{(1+\gamma)}}+o(1),
as\,\,z\searrow \mathcal{E}_{\max }(0),\\
&c(z)=\frac{1}{2\pi{(1+\gamma)}}\ln(z-\mathcal{E}_{\max}(0))-\frac{5\ln 2-3\pi+8}{2\pi{(1+\gamma)}}+o(1),
as\,\,z\searrow \mathcal{E}_{\max }(0),\\
&e(z)=\frac{1}{2\pi{(1+\gamma)}}\ln(z-\mathcal{E}_{\max}(0))-\frac{5\ln 2+\pi-8}{2\pi{(1+\gamma)}}+o(1),
as\,\,z\searrow \mathcal{E}_{\max }(0),\\
&f(z)=\frac{2-\pi}{\pi{(1+\gamma)}}, \quad as\,\,z\searrow
\mathcal{E}_{\max}(0),
\end{align*}
where $\gamma>0.$
\end{lemma}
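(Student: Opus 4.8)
The plan is to split the statement into three groups of claims — real\-valuedness together with strict monotonicity, the sign on each half\-line, and the edge asymptotics — and to exploit the single feature distinguishing the five functions, namely the numerator of the integrand, together with two measure\-preserving symmetries of $\mathbb{T}^2$. First I would treat the interval $(-\infty,\mathcal{E}_{\min}(0))$, recalling $\mathcal{E}_{\min}(0)=0$. For real $z<0$ the denominator $\mathcal{E}_0(p)-z=(1+\gamma)\epsilon(p)-z$ is bounded below by $-z>0$, so every integrand is real and integrable, and differentiation under the integral sign gives
\[
g'(z)=\frac{1}{(2\pi)^2}\int_{\mathbb{T}^2}\frac{N_g(p)}{(\mathcal{E}_0(p)-z)^2}\,\mathrm{d}p,\qquad g\in\{a,b,c,e,f\},
\]
with $N_a=1$, $N_c=\cos^2 p_i$, $N_f=\sin^2 p_i$ nonnegative. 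Hence $a,c,f$ have positive integrands and positive derivative integrands, so they are positive and strictly increasing on $(-\infty,0)$. For $b,e$, whose numerators change sign, I would symmetrize with the reflection $p_i\mapsto\pi-p_i$ (which preserves $\mathrm{d}p$ and sends $\cos p_i\mapsto-\cos p_i$): averaging the integrand with its reflected copy replaces the numerator of $b$ by $(1+\gamma)\cos^2 p_i$ over the product of the two positive denominators, and a fourfold averaging in both variables does the same for $e$. This produces positive symmetrized integrands for $b,e$ and their derivatives, giving positivity and strict monotonicity there as well.

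For the interval $(\mathcal{E}_{\max}(0),+\infty)$, with $\mathcal{E}_{\max}(0)=4(1+\gamma)$, the efficient device is the shift $p\mapsto p+(\pi,\pi)$, which preserves Haar measure and sends $\epsilon(p)\mapsto 4-\epsilon(p)$, hence $\mathcal{E}_0(p)\mapsto\mathcal{E}_{\max}(0)-\mathcal{E}_0(p)$. Thus $\mathcal{E}_0(p)-z\mapsto-\bigl(\mathcal{E}_0(p)-z'\bigr)$ with $z':=\mathcal{E}_{\max}(0)-z<0$, so each function on the upper half\-line is read off from its value at the point $z'$ of the lower half\-line, up to the parity of the numerator under the shift. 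Since $\cos^2 p_i$, $\sin^2 p_i$ and $\cos p_i\cos p_j$ are invariant under the shift while the denominator changes sign, $a,c,e,f$ satisfy $g(z)=-g(z')$; together with the first paragraph this gives negativity, and — because $z\mapsto z'$ reverses orientation — strict monotonicity on $(\mathcal{E}_{\max}(0),+\infty)$. (The numerator $\cos p_i$ of $b$ is odd under the same shift, so $b$ is governed by the identical reduction.)

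The asymptotics come from the nondegenerate extrema $p^\ast=(0,0)$ and $(\pi,\pi)$, where $\epsilon(p)\approx\tfrac12|q|^2$ in local coordinates $q$. I would insert a smooth cutoff $\chi$ supported near $p^\ast$, split each integral into this singular piece plus a remainder continuous up to the edge, and evaluate the singular piece in polar coordinates:
\[
\frac{1}{(2\pi)^2}\int\frac{\chi(q)\,\mathrm{d}q}{\tfrac12(1+\gamma)|q|^2+(\mathcal{E}_{\min}(0)-z)}=-\frac{1}{2\pi(1+\gamma)}\ln(\mathcal{E}_{\min}(0)-z)+O(1),
\]
weighted by the value of the numerator at $p^\ast$. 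This yields the universal coefficient $-\tfrac{1}{2\pi(1+\gamma)}$, explains why $f$ carries no logarithm (its numerator $\sin^2 p_i$ vanishes at $p^\ast$, leaving the finite limit $\tfrac{\pi-2}{\pi(1+\gamma)}$), and shows that the combination $c-e$ governing $\mathcal{S}_0$ is finite at the edge because the numerators agree at $p^\ast$. The behaviour at $\mathcal{E}_{\max}(0)$ then follows from the reflection identity of the previous paragraph.

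The hard part is pinning down the constant terms $5\ln 2$, $-\pi$, $-3\pi+8$, $\pi-8$ and the value $\tfrac{\pi-2}{\pi(1+\gamma)}$: these are the finite parts of the regularized lattice (Watson\-type) integrals over the whole torus, which I would extract from the Bessel representation
\[
\frac{1}{(2\pi)^2}\int_{\mathbb{T}^2}\frac{\mathrm{d}p}{\epsilon(p)+\tilde w}=\int_0^\infty e^{-(2+\tilde w)t}I_0(t)^2\,\mathrm{d}t
\]
by subtracting the $t\to\infty$ tail $\tfrac{1}{2\pi t}e^{-\tilde w t}$, the weighted versions for $\cos^2 p_i$, $\cos p_i\cos p_j$ and $\sin^2 p_i$ being handled by the corresponding modifications of the Bessel kernel. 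I expect this constant bookkeeping, and the matching of reflected constants at $\mathcal{E}_{\max}(0)$, to be the only genuinely laborious step; the qualitative claims follow cleanly from the reflection $p_i\mapsto\pi-p_i$ and the shift $p\mapsto p+(\pi,\pi)$.
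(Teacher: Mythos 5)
Your overall route matches the paper's intended one: the paper itself gives no argument for Lemma \ref{shuh5} (it merely points to Proposition 4.4 of \cite{ham34}), and your plan --- positivity and monotonicity via the reflection $p_i\mapsto\pi-p_i$, transfer to the upper band edge via the antipodal shift $p\mapsto p+(\pi,\pi)$, the logarithm from the nondegenerate extremum, and the finite parts from Watson-type integrals --- is exactly the standard machinery behind that reference. Your lower-half-line analysis checks out: averaging the integrand of $b$ with its reflected copy really does produce the numerator $(1+\gamma)\cos^2 p_i$ over a product of two positive denominators, and the fourfold average for $e$ gives $(1+\gamma)^2\cos^2 p_1\cos^2 p_2$ times a positive factor over the product of the four (positive) reflected denominators, so $a,b,c,e,f$ are positive and strictly increasing on $(-\infty,\mathcal{E}_{\min}(0))$, and for $a,c,e,f$ the identity $g(z)=-g(\mathcal{E}_{\max}(0)-z)$ correctly yields negativity, strict increase, and the upper-edge asymptotics.

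The genuine gap is your parenthetical about $b$ in the shift step. Under $p\mapsto p+(\pi,\pi)$ the numerator $\cos p_i$ is odd \emph{and} the denominator changes sign, so the two signs cancel and the reduction gives $b(z)=+\,b(\mathcal{E}_{\max}(0)-z)$, not $-\,b(\mathcal{E}_{\max}(0)-z)$. Carried out honestly, your own symmetry therefore shows that on $(\mathcal{E}_{\max}(0),+\infty)$ the function $b$ is \emph{positive} and strictly \emph{decreasing}, with
\begin{equation*}
b(z)=-\frac{1}{2\pi(1+\gamma)}\ln\bigl(z-\mathcal{E}_{\max}(0)\bigr)+\frac{5\ln 2-\pi}{2\pi(1+\gamma)}+o(1)\longrightarrow+\infty,
\end{equation*}
the opposite of every claim the lemma makes about $b$ there. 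Two independent cross-checks: expanding \eqref{U15} at infinity gives $b(z)=\frac{1+\gamma}{2z^{2}}+O(z^{-3})>0$, and near $p=(\pi,\pi)$ the integrand is $\approx 1/\bigl((z-\mathcal{E}_{\max}(0))+\tfrac{1+\gamma}{2}|q|^{2}\bigr)>0$. So ``$b$ is governed by the identical reduction'' is false as written; a proof of the lemma as literally stated cannot succeed at this step, and your method in fact exposes a sign misstatement in the lemma itself for $b$ on the upper half-line (harmless downstream, since only $b^{2}$ enters $\widetilde{\Delta}^{e}_{\lambda\mu}$). A second caution concerns the part you deferred: if you carry out the Watson-integral bookkeeping you will get $\gamma$-dependent finite parts. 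From the closed form of the square-lattice Green function one has $\frac{1}{(2\pi)^2}\int_{\mathbb{T}^2}\frac{\d p}{\epsilon(p)+w}=\frac{1}{2\pi}\ln\frac{16}{w}+o(1)$, whence $a(z)=-\frac{1}{2\pi(1+\gamma)}\ln(\mathcal{E}_{\min}(0)-z)+\frac{4\ln 2+\ln(1+\gamma)}{2\pi(1+\gamma)}+o(1)$; this agrees with the stated constant $5\ln 2$ only when $\gamma=1$, which is precisely the case of \cite{ham34}. The quantities actually used later --- the leading log coefficients, the difference $c-e$, and the limit of $f$ --- are free of this $\ln(1+\gamma)$ shift, so Lemmas \ref{shuh7}--\ref{shuh10} are unaffected, but you should not plan to reproduce the lemma's finite constants verbatim for general $\gamma>0$.
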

The lemma \ref{shuh5} can be proved using a similar approach to that of proposition 4.4 in \cite{ham34}.

\begin{lemma}\label{shuh6}
The function $\widetilde{\Delta}^e_{\lambda\mu}(z)$ is real-valued for all $z \in \mathbb{C} \setminus [\mathcal{E}_{min}(0), \mathcal{E}_{max}(0)]$, and it has the following asymptotic behavior:
\begin{itemize}
\item[(\rm{i})]
\begin{align*}
&\widetilde{\Delta}^e_{\lambda\mu}(z)=-\frac{1}{2\pi{(1+\gamma)}}S^{-}(\lambda,\mu;\gamma)
\ln(\mathcal{E}_{\min}(0)-z)+B^{-}(\lambda,\mu;\gamma)+o(1), as
\,\,z\nearrow \mathcal{E}_{\min}(0),
\end{align*}
where
\begin{align*}
&S^{-}(\lambda,\mu;\gamma)=
2\mu +\lambda+\frac{\lambda\mu}{{1+\gamma}},\\
&B^{-}(\lambda,\mu;\gamma)=1+\frac{5\ln 2-\pi}{\pi{(1+\gamma)}}\mu+
\frac{5\ln 2}{2\pi{(1+\gamma)}}\lambda+\frac{5\ln 2-\pi}{\pi{(1+\gamma)}}\Big(\frac{5\ln 2}{2\pi{(1+\gamma)}}-1\Big)\lambda\mu.
\end{align*}

\item[(\rm{ii})]
\begin{align*}
&\widetilde{\Delta}^e_{\lambda\mu}(z)=\frac{1}{2\pi{(1+\gamma)}}S^{+}(\lambda,\mu;\gamma)
\ln(z-\mathcal{E}_{\max}(0))+B^{+}(\lambda,\mu;\gamma)+o(1), as
\,\,z\searrow \mathcal{E}_{\max}(0),
\end{align*}
where
\begin{align*}
&S^{+}(\lambda,\mu;\gamma)=
2\mu +\lambda-\frac{\lambda\mu}{{1+\gamma}},\\
&B^{+}(\lambda,\mu;\gamma)=1-\frac{5\ln 2-\pi}{\pi{(1+\gamma)}}\mu-
\frac{5\ln 2}{2\pi{(1+\gamma)}}\lambda+\frac{5\ln 2-\pi}{\pi{(1+\gamma)}}\Big(\frac{5\ln 2}{2\pi{(1+\gamma)}}-1\Big)\lambda\mu.
\end{align*}
\end{itemize}
\end{lemma}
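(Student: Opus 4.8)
The plan is to substitute the threshold expansions of Lemma~\ref{shuh5} into the closed form
$$
\widetilde{\Delta}^e_{\lambda\mu}(z)=\bigl(1+\lambda a(z)\bigr)\bigl(1+\mu(c(z)+e(z))\bigr)-2\lambda\mu\,b^2(z)
$$
provided by Lemma~\ref{shuh4}, expand, and read off the coefficient of the logarithm together with the constant term. Real-valuedness of $\widetilde{\Delta}^e_{\lambda\mu}$ on $\mathbb{C}\setminus[\mathcal{E}_{\min}(0),\mathcal{E}_{\max}(0)]$ is immediate, since by Lemma~\ref{shuh5} each of $a,b,c,e$ is real there. First I would extract from Lemma~\ref{shuh5} the two facts that drive the computation: all four functions have the \emph{same} leading logarithmic coefficient, equal to $-\tfrac{1}{2\pi(1+\gamma)}$ at the lower edge and $+\tfrac{1}{2\pi(1+\gamma)}$ at the upper edge; and the constant terms obey the single relation that the constant part of $c+e$ is twice the constant part of $b$ (this is the identity $\tfrac{5\ln2-3\pi+8}{2\pi(1+\gamma)}+\tfrac{5\ln2+\pi-8}{2\pi(1+\gamma)}=\tfrac{5\ln2-\pi}{\pi(1+\gamma)}$).

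The structural heart of the proof is a cancellation of the quadratic logarithm. Writing each of $a,\,b,\,\tfrac12(c+e)$ as (leading log)~$+$~(constant)~$+$~(remainder) and multiplying out, the term $\lambda\mu\,a(c+e)$ produces a contribution proportional to $\ln^2(\mathcal{E}_{\min}(0)-z)$ whose coefficient is exactly matched by the one coming from $-2\lambda\mu\,b^2$, because $c+e$ and $2b$ carry the same leading singularity. Thus the a~priori quadratically divergent part cancels identically and $\widetilde{\Delta}^e_{\lambda\mu}(z)$ diverges only like a single logarithm. I expect this cancellation, forced by the coincidence of the leading singularities of $a,b,c,e$, to be the conceptual crux; the rest is bookkeeping.

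With the quadratic term gone, I would collect the coefficient of $\ln(\mathcal{E}_{\min}(0)-z)$: the linear-in-log parts of $\lambda a$ and $\mu(c+e)$ give $\lambda$ and $2\mu$, while the surviving linear part of $\lambda\mu\bigl(a(c+e)-2b^2\bigr)$ contributes $\tfrac{\lambda\mu}{1+\gamma}$, the factor $\tfrac{1}{1+\gamma}$ being twice the difference $\tfrac{5\ln2}{2\pi(1+\gamma)}-\tfrac{5\ln2-\pi}{2\pi(1+\gamma)}$ between the constant terms of $a$ and of $b$. This gives $S^{-}(\lambda,\mu;\gamma)=2\mu+\lambda+\tfrac{\lambda\mu}{1+\gamma}$ as the coefficient of $-\tfrac{1}{2\pi(1+\gamma)}\ln(\mathcal{E}_{\min}(0)-z)$; at the upper edge the sign flip in the leading log coefficient turns this difference into $-\tfrac{\lambda\mu}{1+\gamma}$ and yields $S^{+}$. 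Collecting instead the purely constant terms produces $B^{\mp}(\lambda,\mu;\gamma)$, into which the constant offsets of Lemma~\ref{shuh5} enter linearly in $\lambda$ and in $\mu$ and bilinearly in $\lambda\mu$.

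The step demanding the most care, and the main obstacle, is the remainder control. A naive substitution multiplies a $\ln$-divergent factor by an $o(1)$ remainder and could in principle manufacture spurious $O(\ln)$ or $O(1)$ contributions that would corrupt both the logarithmic coefficient and the constant. To exclude this I would use the finer form of the expansions underlying Lemma~\ref{shuh5}: each remainder in fact vanishes faster than $1/|\ln(\mathcal{E}_{\min}(0)-z)|$ as $z\nearrow\mathcal{E}_{\min}(0)$ (it is of the order of the next term of the threshold expansion obtained as in \cite[Prop.~4.4]{ham34}, namely $O(\mathcal{E}_{\min}(0)-z)$ up to logarithmic factors). Consequently every cross term between a logarithm and a remainder tends to $0$, and, together with the $\ln^2$-cancellation above, this collapses the total error to $o(1)$ as claimed. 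The computation at the upper edge is identical, using the second block of asymptotics in Lemma~\ref{shuh5}, and gives part~(ii).
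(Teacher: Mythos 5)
Your method coincides with the paper's: the paper's entire proof of Lemma~\ref{shuh6} is the one-line remark that it follows from Lemma~\ref{shuh5}, and your fleshed-out version --- substituting the threshold expansions into the closed form of Lemma~\ref{shuh4}, observing that the $\ln^2$ terms from $\lambda\mu\,a(c+e)$ and $-2\lambda\mu\,b^2$ cancel because $c+e$ and $2b$ carry the same leading singularity as $a$, then collecting the single-log and constant terms --- is clearly the intended argument. Your derivation of $S^{\mp}$ is correct, including the identification of $\tfrac{\lambda\mu}{1+\gamma}$ as $2(A-B)$ with $A,B$ the constant offsets of $a,b$. Your worry about log-times-remainder cross terms is legitimate, and your resolution is acceptable; in fact it can be made exact rather than appealing to finer expansions: integrating $1$ and $\cos p_1$ against $(\mathcal{E}_0(p)-z)^{-1}$ yields the identities $2(1+\gamma)\bigl(a(z)-b(z)\bigr)=1+z\,a(z)$ and $(1+\gamma)\bigl(2b(z)-c(z)-e(z)\bigr)=z\,b(z)$, which show the relevant differences are exact constants up to $O\bigl(|z-\mathcal{E}_{\min}(0)|\ln^2|z-\mathcal{E}_{\min}(0)|\bigr)$, so all error terms collapse to $o(1)$.

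The one step you dismissed as ``bookkeeping'' is, however, exactly where your proposal and the printed statement part ways. Writing $A=\tfrac{5\ln 2}{2\pi(1+\gamma)}$, $B=\tfrac{5\ln 2-\pi}{2\pi(1+\gamma)}$, $C=2B$ for the constant offsets of $a$, $b$, $c+e$, the bilinear constant contribution at the lower edge is $AC-2B^2=C(A-B)=\tfrac{5\ln 2-\pi}{2\pi(1+\gamma)^2}$, using $A-B=\tfrac{1}{2(1+\gamma)}$; the lemma instead asserts the coefficient $\tfrac{5\ln 2-\pi}{\pi(1+\gamma)}\bigl(\tfrac{5\ln 2}{2\pi(1+\gamma)}-1\bigr)=C(A-1)$, and the two agree only if $B=1$, which fails for every $\gamma>0$. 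So your claim that collecting constants ``produces $B^{\mp}$'' is not something the substitution actually delivers: either you must exhibit a computation yielding $C(A-1)$ --- the direct one does not --- or you should flag that the printed $\lambda\mu$-coefficient of $B^{\pm}$ appears to contain a slip, with ``$-1$'' standing where ``$-\tfrac{5\ln 2-\pi}{2\pi(1+\gamma)}$'' should be (the same remark applies verbatim at the upper edge). The discrepancy leaves $S^{\pm}$ untouched, and it is only the sign structure of $S^{\pm}$ that feeds into Lemmas~\ref{shuh11}--\ref{shuh12} and the main theorems, but as a proof of the statement as printed your proposal has a genuine unclosed gap at precisely this constant.
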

By using Lemma \ref{shuh5}, we can prove Lemma \ref{shuh6}.

\begin{lemma}\label{shuh7}
The function $\Delta^e_\mu(z)$ is a real-valued function in $\mathbb{C}\setminus[\mathcal{E}_{\min}(0),\mathcal{E}_{\max}(0)]$ and it has the following asymptotic behavior:
\begin{align*}
&\lim\limits_{z\rightarrow {\pm\infty}}\Delta^e_{ \mu }(z)=1,\\
&\lim\limits_{z \nearrow \mathcal{E}_{\min}(0)} \Delta^e_{\mu }(z) =1+\frac{8-2\pi}{\pi{(1+\gamma)}}\mu,\\
&\lim\limits_{z \searrow \mathcal{E}_{\max}(0)} \Delta^e_{\mu }(z) =1-\frac{8-2\pi}{\pi{(1+\gamma)}}\mu.
\end{align*}

\end{lemma}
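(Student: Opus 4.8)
The plan is to work directly from the explicit formula \eqref{U19}, namely $\Delta^e_{\mu}(z)=1+\mu\bigl(c(z)-e(z)\bigr)$, and to read off all three limits from the properties of $c$ and $e$ already recorded in Lemma \ref{shuh5}. First I would settle the real-valuedness. For real $z$ lying outside $[\mathcal{E}_{\min}(0),\mathcal{E}_{\max}(0)]$, the denominator $\mathcal{E}_0(p)-z$ in \eqref{U16} and \eqref{U17} is real and does not vanish on $\mathbb{T}^2$, while the numerators $\cos^2 p_i$ and $\cos p_i\cos p_j$ are real; hence both $c(z)$ and $e(z)$ are real, and therefore so is $\Delta^e_{\mu}(z)$.

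Next, for the behavior as $z\to\pm\infty$, I would argue by dominated convergence: the numerators in \eqref{U16} and \eqref{U17} are bounded on $\mathbb{T}^2$, while $|\mathcal{E}_0(p)-z|\to\infty$ uniformly in $p$ as $|z|\to\infty$, so the integrands tend to $0$ and thus $c(z)\to0$, $e(z)\to0$. Consequently $c(z)-e(z)\to0$ and $\Delta^e_{\mu}(z)\to1$.

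Finally, for the two threshold limits I would substitute the asymptotic expansions of $c$ and $e$ from Lemma \ref{shuh5}. The essential observation is that near each threshold the logarithmic terms of $c(z)$ and $e(z)$ carry the identical coefficient $\mp\tfrac{1}{2\pi(1+\gamma)}$, so they cancel exactly in the difference $c(z)-e(z)$, leaving a finite limit equal to the difference of the constant terms. At $z\nearrow\mathcal{E}_{\min}(0)$ this difference is
\[
\frac{(5\ln 2-3\pi+8)-(5\ln 2+\pi-8)}{2\pi(1+\gamma)}=\frac{16-4\pi}{2\pi(1+\gamma)}=\frac{8-2\pi}{\pi(1+\gamma)},
\]
and at $z\searrow\mathcal{E}_{\max}(0)$ the sign of the constant terms reverses, yielding $-\tfrac{8-2\pi}{\pi(1+\gamma)}$; multiplying by $\mu$ and adding $1$ produces the two stated values.

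There is no genuine obstacle here: once Lemma \ref{shuh5} is in hand, the entire statement is bookkeeping. The only conceptual point worth flagging is the cancellation of the logarithms in $c(z)-e(z)$, which is precisely what keeps $\Delta^e_\mu$ bounded at the thresholds, in contrast to $\widetilde{\Delta}^e_{\lambda\mu}$ of Lemma \ref{shuh6}, whose leading logarithmic term does \emph{not} cancel; everything else is a routine arithmetic simplification.
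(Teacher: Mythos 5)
Your proof is correct and follows essentially the same route as the paper, which likewise obtains the limit at $\pm\infty$ from Lebesgue's dominated convergence theorem and derives the two threshold limits from the asymptotics of $c(z)$ and $e(z)$ in Lemma \ref{shuh5}; your explicit verification of the cancellation of the logarithmic terms and the arithmetic $\frac{16-4\pi}{2\pi(1+\gamma)}=\frac{8-2\pi}{\pi(1+\gamma)}$ simply fills in the details the paper leaves to the reader.
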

\begin{proof}
The first equality follows from the Lebesgue's dominated convergence theorem. The proofs of the last two equalities can be derived using Lemma \ref{shuh5}.
\end{proof}

Next, we will study the number and locations of the roots of the function $\Delta^e_\mu$, which is defined in equation \eqref{U19}.

\begin{lemma}\label{shuh8} Let $\mu\in\mathbb{R}.$
\begin{itemize}
\item[(\rm{i})]If $\mu<\frac{\pi(1+\gamma)}{2\pi-8},$ then the function $\Delta^{e}_{\mu}(\cdot)$
has a unique zero $\zeta^-(\mu)$, which is located in the
interval $(-\infty, \mathcal{E}_{\min}(0))$ and it has no zeros in the interval $(\mathcal{E}_{\max}(0),+\infty)$.

\item[(\rm{ii})] If $\mu\in[\frac{\pi(1+\gamma)}{2\pi-8},
\frac{\pi(1+\gamma)}{8-2\pi}],$ then the function $\Delta^{e}_{\mu}(\cdot)$  has no zeros
in $\mathbb{C}\setminus [\mathcal{E}_{\min}(0),\mathcal{E}_{\max}(0)]$.

\item[(\rm{iii})] If $\mu>\frac{\pi(1+\gamma)}{8-2\pi},$
then the function $\Delta^{e}_{\mu}(\cdot)$  has a unique zero
$\zeta^+(\mu)$, which is located in the
interval $(\mathcal{E}_{\max}(0),+\infty)$ and it has no zeros in the interval $(-\infty, \mathcal{E}_{\min}(0))$.
\end{itemize}
\end{lemma}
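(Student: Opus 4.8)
The plan is to collapse the whole question onto the single scalar function $g(z):=c(z)-e(z)$, since by \eqref{U19} one has $\Delta^e_\mu(z)=1+\mu\,g(z)$, so the zeros of $\Delta^e_\mu$ are exactly the solutions of $g(z)=-1/\mu$ (when $\mu\neq0$). First I would rewrite $g$ in a form that exposes its sign and monotonicity. Using the definitions \eqref{U16}, \eqref{U17} and the symmetry of $\mathcal{E}_0(p)=(1+\gamma)\epsilon(p)$ under the interchange $p_1\leftrightarrow p_2$, I symmetrize the numerator $\cos^2 p_1-\cos p_1\cos p_2$ to obtain
\begin{equation*}
g(z)=c(z)-e(z)=\frac{1}{2(2\pi)^2}\int_{\mathbb{T}^2}\frac{(\cos p_1-\cos p_2)^2}{\mathcal{E}_0(p)-z}\,\mathrm{d}p .
\end{equation*}
Thus $g$ is the Stieltjes (Cauchy) transform of a nonnegative, nontrivial measure supported on $[\mathcal{E}_{\min}(0),\mathcal{E}_{\max}(0)]$, i.e.\ a Herglotz function.

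From this representation two facts follow at once. Since $\Im\frac{1}{\mathcal{E}_0(p)-z}=\frac{\Im z}{|\mathcal{E}_0(p)-z|^2}$ and $(\cos p_1-\cos p_2)^2$ is not a.e.\ zero, I get $\sign\,\Im g(z)=\sign\,\Im z$ for every non-real $z$. Hence, for $\mu\neq0$, $\Im\Delta^e_\mu(z)=\mu\,\Im g(z)\neq0$ whenever $\Im z\neq0$, while for $\mu=0$ one has $\Delta^e_\mu\equiv1$; in all cases $\Delta^e_\mu$ has no non-real zeros in $\mathbb{C}\setminus[\mathcal{E}_{\min}(0),\mathcal{E}_{\max}(0)]$. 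This reduces the problem to locating the real zeros on the two half-lines $(-\infty,\mathcal{E}_{\min}(0))$ and $(\mathcal{E}_{\max}(0),+\infty)$.

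On each half-line $g$ is real-valued and, because $g'(z)=\frac{1}{2(2\pi)^2}\int_{\mathbb{T}^2}(\cos p_1-\cos p_2)^2/(\mathcal{E}_0(p)-z)^2\,\mathrm{d}p>0$, strictly increasing; consequently $\Delta^e_\mu$ is strictly monotone, with the sign of its slope governed by $\sign\,\mu$, so it has at most one zero on each half-line. I then read off the endpoint values from Lemma \ref{shuh7}: on $(-\infty,\mathcal{E}_{\min}(0))$ the function runs between $\lim_{z\to-\infty}\Delta^e_\mu=1$ and $\lim_{z\nearrow\mathcal{E}_{\min}(0)}\Delta^e_\mu=1+\frac{8-2\pi}{\pi(1+\gamma)}\mu$, while on $(\mathcal{E}_{\max}(0),+\infty)$ it runs between $1-\frac{8-2\pi}{\pi(1+\gamma)}\mu$ and $1$. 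Since $8-2\pi>0$, the intermediate value theorem combined with strict monotonicity yields a (necessarily unique) zero in $(-\infty,\mathcal{E}_{\min}(0))$ precisely when $1+\frac{8-2\pi}{\pi(1+\gamma)}\mu<0$, i.e.\ $\mu<\frac{\pi(1+\gamma)}{2\pi-8}$, and a unique zero in $(\mathcal{E}_{\max}(0),+\infty)$ precisely when $1-\frac{8-2\pi}{\pi(1+\gamma)}\mu<0$, i.e.\ $\mu>\frac{\pi(1+\gamma)}{8-2\pi}$; for $\mu$ in the closed intermediate interval both limiting endpoint values are nonnegative and the interior values are strictly positive, so there is no zero. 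Assembling the three cases gives statements (i)--(iii).

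The hard part will be the very first step: the numerator $\cos^2 p_1-\cos p_1\cos p_2$ of $g$ is not sign-definite on $\mathbb{T}^2$, so neither positivity, monotonicity, nor the absence of complex zeros is visible directly. The entire argument hinges on exploiting the $p_1\leftrightarrow p_2$ symmetry of the measure to rewrite $g$ as the Stieltjes transform of the manifestly nonnegative weight $\tfrac12(\cos p_1-\cos p_2)^2$. Once that identity is secured, the Herglotz property (ruling out complex zeros) and the monotonicity (at most one real zero per half-line) are automatic, and only the elementary bookkeeping of endpoint values from Lemma \ref{shuh7} remains.
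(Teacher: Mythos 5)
Your proof is correct, and it is more than the paper actually supplies: the paper disposes of Lemma \ref{shuh8} with a one-line remark that it ``can be proved in a similar way to lemma 4.5 in \cite{ham34}'', so the genuine content of the argument is delegated to a citation. Your write-up fills exactly the gap that such a citation papers over. The non-obvious point is that the tools the paper puts on record do not by themselves settle the lemma: Lemma \ref{shuh5} asserts that $c(z)$ and $e(z)$ are each strictly increasing on the two real half-lines, but the difference of two increasing functions need not be monotone, so $\Delta^e_\mu(z)=1+\mu\bigl(c(z)-e(z)\bigr)$ cannot be analyzed from Lemma \ref{shuh5} alone. Your symmetrization step,
\begin{equation*}
c(z)-e(z)=\frac{1}{2(2\pi)^2}\int_{\mathbb{T}^2}\frac{(\cos p_1-\cos p_2)^2}{\mathcal{E}_0(p)-z}\,\mathrm{d}p,
\end{equation*}
which is legitimate because $\mathcal{E}_0(p)=(1+\gamma)\epsilon(p)$ is invariant under $p_1\leftrightarrow p_2$, is precisely the missing ingredient: it exhibits $c-e$ as the Stieltjes transform of a nonnegative nontrivial weight, whence strict monotonicity on each half-line, strict positivity on $(-\infty,\mathcal{E}_{\min}(0))$ and negativity on $(\mathcal{E}_{\max}(0),+\infty)$, and the Herglotz sign relation $\sign\Im(c-e)(z)=\sign\Im z$ that rules out non-real zeros --- a point item (ii) requires (the claim is about all of $\mathbb{C}\setminus[\mathcal{E}_{\min}(0),\mathcal{E}_{\max}(0)]$) and that the paper's stated lemmas never address explicitly. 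From there your bookkeeping with the boundary values of Lemma \ref{shuh7} is exactly the intended intermediate-value argument, including the correct treatment of the borderline couplings $\mu=\frac{\pi(1+\gamma)}{2\pi-8}$ and $\mu=\frac{\pi(1+\gamma)}{8-2\pi}$, where the threshold limit vanishes but is not attained on the open half-line, and the observation that for $\mu$ of the ``wrong'' sign one has $\mu\,(c-e)>0$ on the opposite half-line, so $\Delta^e_\mu>1$ there. In short: same skeleton as the argument the paper gestures at via \cite{ham34}, but you have made the decisive positivity/monotonicity mechanism explicit rather than imported, and the proof stands on its own.
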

The lemma \ref{shuh8} can be proved in a similar way to lemma 4.5 in \cite{ham34}.

\begin{lemma}\label{shuh9}
The function $\Delta^o_\mu(z)$ is real-valued for all $z\in\mathbb{C}\setminus[\mathcal{E}_{\min}(0),\mathcal{E}_{\max}(0)]$, and it has the following asymptotic behavior:
\begin{align*}
&\lim\limits_{z\rightarrow {\pm\infty}}\Delta^o_{ \mu }(z)=1,\\
&\lim\limits_{z \nearrow \mathcal{E}_{\min}(0)} \Delta^o_{\mu }(z) =\Big(1+\frac{\pi-2}{\pi{(1+\gamma)}}\mu\Big)^2,  \\
&\lim\limits_{z \searrow \mathcal{E}_{\max}(0)} \Delta^o_{\mu }(z) =\Big(1-\frac{\pi-2}{\pi{(1+\gamma)}}\mu\Big)^2.
\end{align*}

\end{lemma}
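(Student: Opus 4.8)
The plan is to read off every assertion directly from the closed-form expression \eqref{U20}, namely $\Delta^o_\mu(z)=[1+\mu f(z)]^2$, together with the properties of $f$ already recorded in Lemma \ref{shuh5}. Since the map $t\mapsto(1+\mu t)^2$ is a fixed real polynomial in $t$, once one knows that $f(z)$ is real for real $z$ outside $[\mathcal{E}_{\min}(0),\mathcal{E}_{\max}(0)]$ and knows its limiting values, all three limits follow by substitution. Thus the lemma is essentially a corollary of Lemma \ref{shuh5}, obtained exactly as Lemma \ref{shuh7} was from the same source.

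First I would establish real-valuedness. For $z$ in the real complement of $[\mathcal{E}_{\min}(0),\mathcal{E}_{\max}(0)]$ the denominator $\mathcal{E}_0(p)-z$ in \eqref{U18} is real and nonvanishing and the numerator $\sin^2 p_i$ is real, so $f(z)\in\mathbb{R}$ by Lemma \ref{shuh5}; consequently $1+\mu f(z)\in\mathbb{R}$ and $\Delta^o_\mu(z)=[1+\mu f(z)]^2\in\mathbb{R}$. Next I would treat the behaviour at infinity: as $z\to\pm\infty$ the integrand $\sin^2 p_i/(\mathcal{E}_0(p)-z)$ tends to $0$ pointwise and is dominated, for $|z|$ large, by an integrable function, so Lebesgue's dominated convergence theorem gives $f(z)\to 0$ and hence $\Delta^o_\mu(z)\to 1$. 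This is the same argument used for the first limit in Lemma \ref{shuh7}.

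Finally, for the two threshold limits I would simply insert the boundary values of $f$ supplied by Lemma \ref{shuh5}, where $f(z)\to\frac{\pi-2}{\pi(1+\gamma)}$ as $z\nearrow\mathcal{E}_{\min}(0)$ and $f(z)\to\frac{2-\pi}{\pi(1+\gamma)}$ as $z\searrow\mathcal{E}_{\max}(0)$. Substituting into $[1+\mu f(z)]^2$ yields
\[
\lim_{z\nearrow\mathcal{E}_{\min}(0)}\Delta^o_\mu(z)=\Big(1+\frac{\pi-2}{\pi(1+\gamma)}\mu\Big)^2,\qquad
\lim_{z\searrow\mathcal{E}_{\max}(0)}\Delta^o_\mu(z)=\Big(1-\frac{\pi-2}{\pi(1+\gamma)}\mu\Big)^2,
\]
the second equality using $\frac{2-\pi}{\pi(1+\gamma)}=-\frac{\pi-2}{\pi(1+\gamma)}$.

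The only point that is not purely formal is that $f$ possesses a finite (rather than logarithmically divergent) limit at $\mathcal{E}_{\min}(0)$ and $\mathcal{E}_{\max}(0)$, in contrast to $a,b,c,e$; this reflects the fact that $\sin^2 p_i$ vanishes to second order at the extremal point $p=0$ of $\mathcal{E}_0$, cancelling the $|p|^{-2}$ singularity of the resolvent kernel and rendering the integral \eqref{U18} convergent up to the thresholds. I expect this to be the only genuine obstacle, but since it is already discharged inside Lemma \ref{shuh5}, here it may simply be cited.
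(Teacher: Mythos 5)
Your proposal is correct and follows essentially the same route as the paper's own (very terse) proof: the limit at $\pm\infty$ via Lebesgue's dominated convergence theorem applied to \eqref{U18}, and the two threshold limits by substituting into $\Delta^o_\mu(z)=[1+\mu f(z)]^2$ the boundary values of $f$ recorded in Lemma \ref{shuh5}. Your additional remarks --- restricting the real-valuedness claim to real $z$ outside $[\mathcal{E}_{\min}(0),\mathcal{E}_{\max}(0)]$, and noting that the second-order vanishing of $\sin^2 p_i$ at $p=0$ is what makes $f$ finite at the thresholds --- only make explicit what the paper leaves implicit in Lemma \ref{shuh5}.
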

\begin{proof}
The first  item follows directly from the Lebesgue dominated convergence theorem. The proofs for the last two  items can be obtained using Lemma \ref{shuh5}.
\end{proof}

Now, we will study the number and location of the roots of the function $\Delta^{o}_{\mu}$,
 which is defined by equation \eqref{U20}.

\begin{lemma}\label{shuh10} Let $\mu\in\mathbb{R}.$
\begin{itemize}
\item[(\rm{i})]If
$\mu<\frac{\pi(1+\gamma)}{2-\pi},$ then the function $\Delta^{o}_{\mu}(\cdot)$ has a
unique zero, which is located in the
interval $(-\infty,\mathcal{E}_{\min}(0))$ and it has no zeros in the
interval $(\mathcal{E}_{\max}(0),+\infty)$.

\item[(\rm{ii})] If $\mu\in[\frac{\pi(1+\gamma)}{2-\pi},
\frac{\pi(1+\gamma)}{\pi-2}],$ then the function $\Delta^{o}_{\mu}(\cdot)$  has no zeros
in $\mathbb{C}\setminus [\mathcal{E}_{\min}(0),\mathcal{E}_{\max}(0)]$.

\item[(\rm{iii})] If $\mu>\frac{\pi(1+\gamma)}{\pi-2},$
then the function $\Delta^{o}_{\mu}(\cdot)$  has a unique zero
 in the
interval $(\mathcal{E}_{\max}(0),+\infty)$, and it has no zeros in
the
interval $(-\infty, \mathcal{E}_{\min}(0))$.
\end{itemize}
\end{lemma}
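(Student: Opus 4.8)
The plan is to exploit the factorization $\Delta^o_\mu(z)=[1+\mu f(z)]^2$ from \eqref{U20}, which reduces the problem to locating the zeros of the single scalar function $\phi(z):=1+\mu f(z)$; every zero of $\phi$ is then a zero of $\Delta^o_\mu$ of double multiplicity, which is precisely what later produces the multiplicity-two eigenvalues of $H^o_\mu(0)$ in Theorem \ref{lak5}. First I would dispose of possible non-real zeros. Writing $f(z)=\frac{1}{(2\pi)^2}\int_{\mathbb{T}^2}\frac{\sin^2 p_i}{\mathcal{E}_0(p)-z}\,\mathrm{d}p$ as in \eqref{U18}, one sees that for $\Im z\neq0$ the quantity $\Im f(z)=\frac{\Im z}{(2\pi)^2}\int_{\mathbb{T}^2}\frac{\sin^2 p_i}{|\mathcal{E}_0(p)-z|^2}\,\mathrm{d}p$ has the same sign as $\Im z$ and never vanishes, since $\sin^2 p_i>0$ on a set of full measure. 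Hence $f$ is a Herglotz function and $1+\mu f(z)$ can vanish only for real $z$; consequently all zeros of $\Delta^o_\mu$ lie in $(-\infty,\mathcal{E}_{\min}(0))\cup(\mathcal{E}_{\max}(0),+\infty)$, and it suffices to analyze $\phi$ separately on each of these two real half-lines.

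On $(-\infty,\mathcal{E}_{\min}(0))$, Lemma \ref{shuh5} tells us that $f$ is real-valued, positive and strictly increasing, with $f(z)\to0^+$ as $z\to-\infty$ (so that $\phi(z)\to1$, consistent with Lemma \ref{shuh9}) and $f(z)\to\frac{\pi-2}{\pi(1+\gamma)}$ as $z\nearrow\mathcal{E}_{\min}(0)$. Therefore $\phi$ is continuous and strictly monotone there, running between the value $1$ at $-\infty$ and the limiting value $1+\frac{(\pi-2)\mu}{\pi(1+\gamma)}$ at $\mathcal{E}_{\min}(0)$. By the intermediate value theorem together with strict monotonicity, $\phi$ has a necessarily unique zero in this interval if and only if the limiting endpoint value is strictly negative, i.e. $1+\frac{(\pi-2)\mu}{\pi(1+\gamma)}<0$, which, since $\pi-2>0$, is equivalent to $\mu<\frac{\pi(1+\gamma)}{2-\pi}$. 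This yields the existence and uniqueness of the zero asserted in (i), and its absence under the complementary condition.

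On $(\mathcal{E}_{\max}(0),+\infty)$ the argument is symmetric: by Lemma \ref{shuh5}, $f$ is negative and strictly increasing, with $f(z)\to\frac{2-\pi}{\pi(1+\gamma)}$ as $z\searrow\mathcal{E}_{\max}(0)$ and $f(z)\to0^-$ as $z\to+\infty$, so $\phi$ runs monotonically from $1-\frac{(\pi-2)\mu}{\pi(1+\gamma)}$ to $1$; a unique zero exists there precisely when the left endpoint value is negative, i.e. $\mu>\frac{\pi(1+\gamma)}{\pi-2}$, giving (iii). Assembling the two half-line analyses produces the trichotomy: for $\mu<\frac{\pi(1+\gamma)}{2-\pi}$ exactly one zero below and none above, for $\mu>\frac{\pi(1+\gamma)}{\pi-2}$ exactly one zero above and none below, and for $\mu$ in the closed intermediate interval no zeros in $\mathbb{C}\setminus[\mathcal{E}_{\min}(0),\mathcal{E}_{\max}(0)]$, which is statement (ii). The only point requiring care is the two threshold values themselves: there $\phi$ attains the value $0$ only in the limit at $\mathcal{E}_{\min}(0)$ or $\mathcal{E}_{\max}(0)$ and not at any interior point, so strict monotonicity guarantees no zero in the open half-line, which is why the intermediate interval in (ii) is taken closed. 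This is the main (though minor) subtlety; the remainder is a routine monotonicity-plus-intermediate-value count entirely parallel to the proof of Lemma \ref{shuh8}.
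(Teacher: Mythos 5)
Your proof is correct and follows essentially the paper's intended route: the paper proves this lemma only by reference to Lemma 4.5 of \cite{ham34}, which is exactly your argument — reduce $\Delta^o_\mu(z)=[1+\mu f(z)]^2$ to the scalar function $\phi(z)=1+\mu f(z)$ and count zeros via the strict monotonicity, sign, and threshold limits of $f$ supplied by Lemmas \ref{shuh5} and \ref{shuh9} (your Herglotz argument ruling out non-real zeros is a valid alternative to invoking Lemma \ref{shuh3} together with the self-adjointness of $H^o_\mu(0)$). The only nitpick is that $\phi$ is strictly monotone only for $\mu\neq 0$; for $\mu=0$ one has $\phi\equiv 1$, which trivially has no zeros and falls under case (ii).
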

The lemma \ref{shuh10} can be proven in a similar way as lemma 4.5 in \cite{ham34}.

The following lemma provides the dependence to the number of zeros
of $\widetilde{\Delta}^{e}_{\lambda\mu}$ in $(\mathcal{E}_{\max}(0),+\infty)$ on the parameters $\mu$ and $\lambda$.

\begin{lemma}\label{shuh11}
Let $(\mu,\lambda)\in\mathbb{R}^2.$

\begin{itemize}
\item[(a)] If $2\mu +\lambda-\frac{\lambda\mu}{{1+\gamma}}\ge0$ and $\mu<1+\gamma,$ then the function    $\widetilde{\Delta}^{e}_{\lambda\mu}(\cdot)$ has no zeros in the interval $(\mathcal{E}_{\max}(0),+\infty).$

\item[(b)] If $2\mu +\lambda-\frac{\lambda\mu}{{1+\gamma}}<0$ or $2\mu +\lambda-\frac{\lambda\mu}{{1+\gamma}}=0$ with $\mu>1+\gamma,$  then the function $\widetilde{\Delta}^{e}_{\lambda\mu}(\cdot)$ has a unique zero in the interval $(\mathcal{E}_{\max}(0),+\infty).$

\item[(c)] If $2\mu +\lambda-\frac{\lambda\mu}{{1+\gamma}}>0$ and $\mu>1+\gamma,$ then the function $\widetilde{\Delta}^{e}_{\lambda\mu}(\cdot)$ has two zeros in the interval $(\mathcal{E}_{\max}(0),+\infty).$
\end{itemize}

\end{lemma}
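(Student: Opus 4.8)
The plan is to treat $\widetilde{\Delta}^e_{\lambda\mu}$ as a real-analytic function of the real variable $z$ on $I=(\mathcal{E}_{\max}(0),+\infty)$ and to count its zeros from the behaviour at the two ends of $I$ together with an a priori bound on the total number of zeros. First I would record the two boundary values. As $z\to+\infty$ the functions $a,b,c,e$ all tend to $0$ by dominated convergence, so $\widetilde{\Delta}^e_{\lambda\mu}(z)\to 1$. At the lower end, Lemma \ref{shuh6}(ii) gives $\widetilde{\Delta}^e_{\lambda\mu}(z)=\frac{1}{2\pi(1+\gamma)}S^{+}(\lambda,\mu;\gamma)\ln(z-\mathcal{E}_{\max}(0))+B^{+}(\lambda,\mu;\gamma)+o(1)$, so that $\widetilde{\Delta}^e_{\lambda\mu}(z)$ tends to $-\infty$ when $S^{+}>0$, to $+\infty$ when $S^{+}<0$, and to the finite value $B^{+}(\lambda,\mu;\gamma)$ when $S^{+}=0$; on the curve $\{S^{+}=0\}$ I would substitute $\lambda=\tfrac{-2\mu(1+\gamma)}{1+\gamma-\mu}$ and check that the sign of $B^{+}$ is governed by the sign of $\mu-(1+\gamma)$, which is precisely the second condition appearing in the statement.

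Next I would establish the a priori cap. By the factorization of Lemma \ref{shuh4}, $\widetilde{\Delta}^e_{\lambda\mu}$ is the reduced Fredholm (Birman--Schwinger) determinant of $I-B^e_{\lambda\mu}(0,z)$ on the two-dimensional symmetric channel $\operatorname{span}\{\alpha_1^e,\tfrac{1}{\sqrt2}(\alpha_2^e+\alpha_3^e)\}$, the orthogonal complement of the rank-one antisymmetric direction carrying the factor $1+\mu(c-e)$. The perturbation acting on this channel has positive part of rank at most two, so by the Birman--Schwinger principle $H^e_{\lambda\mu}(0)$ has at most two eigenvalues above $\mathcal{E}_{\max}(0)$; hence $\widetilde{\Delta}^e_{\lambda\mu}$ has at most two zeros in $I$, counted with multiplicity.

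The shape information that upgrades ``at most two'' to the exact count comes from monotonicity. Writing $d\nu_z=\frac{dp}{(2\pi)^2(z-\mathcal{E}_0(p))}$, a positive measure on $\mathbb{T}^2$ for $z\in I$, one has $a=-\int d\nu_z$, $b=-\int\cos p_1\,d\nu_z$, $c=-\int\cos^2 p_1\,d\nu_z$ and $e=-\int\cos p_1\cos p_2\,d\nu_z$, so Cauchy--Schwarz applied to the functions $1$ and $\cos p_1+\cos p_2$ yields $a(z)(c(z)+e(z))-2b(z)^2\ge 0$ throughout $I$. Together with the strict monotonicity of $a,b,c,e$ from Lemma \ref{shuh5}, this pins down the sign of $\frac{d}{dz}\widetilde{\Delta}^e_{\lambda\mu}$ well enough to show that in each parameter regime the graph has at most one interior turning point; combining this with the two boundary values and the intermediate value theorem produces the numbers $0$, $1$, $2$ asserted in (a), (b), (c), the borderline loci $S^{+}=0$ and $\mu=1+\gamma$ being resolved through the finite limit $B^{+}$.

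The main obstacle is exactly this last step: because of the bilinear $\lambda\mu$ term, $\frac{d}{dz}\widetilde{\Delta}^e_{\lambda\mu}$ is not manifestly single-signed, so the ``at most one turning point'' assertion needs care. I would avoid a brute-force derivative estimate by a continuity argument. The integer count $N(\lambda,\mu)$ of zeros in $I$ can change only when a zero reaches the threshold $\mathcal{E}_{\max}(0)$ (no zero can escape to $+\infty$, where $\widetilde{\Delta}^e_{\lambda\mu}\to1\neq0$) or when two real zeros collide into a double root and become complex; one shows these transitions occur exactly across $\{S^{+}=0\}$ and $\{\mu=1+\gamma\}$, so $N$ is constant on each connected component cut out by these two loci. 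It then suffices to evaluate $N$ at one representative of each component, most conveniently on the rank-one slices $\mu=0$ or $\lambda=0$, where $\widetilde{\Delta}^e_{\lambda\mu}$ collapses to $1+\lambda a$ or $1+\mu(c+e)$ and the count is immediate from the monotonicity and sign data of Lemma \ref{shuh5}.
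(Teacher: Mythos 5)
Your strategy founders at the final synthesis step, and in a way your own boundary analysis already exposes: the counts you claim to extract are incompatible with the parity forced by the signs at the two ends of the interval. By Lemma \ref{shuh6}(ii), when $S^{+}(\lambda,\mu;\gamma)>0$ the determinant $\widetilde{\Delta}^{e}_{\lambda\mu}(z)$ tends to $-\infty$ as $z\searrow\mathcal{E}_{\max}(0)$ (you state this correctly), while $\widetilde{\Delta}^{e}_{\lambda\mu}(z)\to1$ as $z\to+\infty$; the intermediate value theorem then forces at least one zero. Yet item (a), which you claim to recover, asserts \emph{no} zeros when $S^{+}\ge0$ and $\mu<1+\gamma$. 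Your own calibration slice makes the clash concrete: at $\mu=0$ one has $\widetilde{\Delta}^{e}_{\lambda\mu}=1+\lambda a$ with $a<0$ strictly increasing from $-\infty$ to $0^{-}$ on $(\mathcal{E}_{\max}(0),+\infty)$, so there is exactly one zero for every $\lambda>0$ --- a point squarely inside the hypotheses of (a), where $S^{+}=\lambda>0$ and $\mu=0<1+\gamma$. An honest execution of your plan therefore \emph{refutes} the lemma as printed rather than proving it: the inequalities on $S^{+}$ in the statement are reversed (they copy the pattern of Lemma \ref{shuh12}, where the logarithm enters with the opposite overall sign), and the trichotomy that parity plus your rank-two cap actually yields is: no zeros if $S^{+}\le0$ and $\mu<1+\gamma$; a unique zero if $S^{+}>0$, or $S^{+}=0$ with $\mu>1+\gamma$; two zeros if $S^{+}<0$ and $\mu>1+\gamma$. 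You wrote that the boundary values and the IVT ``produce the numbers $0$, $1$, $2$ asserted in (a), (b), (c)'' without carrying out this sign bookkeeping; carrying it out is precisely where the discrepancy surfaces. (The paper itself offers nothing to check against here: it proves Lemma \ref{shuh11} only by deferring to Lemma 4.6 of \cite{ham34}, which is presumably why the misprint went unnoticed.)

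Independently of the sign issue, your continuity-plus-representative-point scheme has a hole exactly where it is needed. Granting the cap of two, parity alone already pins the region $\{S^{+}>0\}$ to exactly one zero; the genuine work is deciding $0$ versus $2$ on the two connected components of $\{S^{+}<0\}$, which are separated by the line $\mu=1+\gamma$. But the two-zero component $\{S^{+}<0,\ \mu>1+\gamma\}$ forces $\lambda>0$ (indeed $\lambda>2\mu(1+\gamma)/(\mu-(1+\gamma))$) and hence meets neither of your calibration slices: on $\lambda=0$ one has $S^{+}=2\mu$ and on $\mu=0$ one has $S^{+}=\lambda$, so $S^{+}<0$ on those slices forces $\mu<0$, respectively $\lambda<0$. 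Thus the rank-one evaluations cannot fix the count on that component, and your assertion that double-root transitions occur ``exactly across $\{S^{+}=0\}$ and $\{\mu=1+\gamma\}$'' is the crux of the lemma left unproved. A workable replacement: for $\lambda,\mu>0$ let $z_{0}$ be the unique zero of $1+\mu(c+e)$ in the interval; then $\widetilde{\Delta}^{e}_{\lambda\mu}(z_{0})=-2\lambda\mu\, b(z_{0})^{2}<0$ whenever $b(z_{0})\neq0$, which together with the value $+\infty$ at the threshold (for $S^{+}<0$), the limit $1$ at infinity, and the cap yields exactly two zeros. Finally, be wary of leaning on Lemma \ref{shuh5} above the threshold: $b$ in fact changes sign there ($b(z)\sim(1+\gamma)/(2z^{2})>0$ for large $z$, while $b\to-\infty$ at $\mathcal{E}_{\max}(0)$), so the blanket negativity and monotonicity you invoke in your Cauchy--Schwarz step is itself unreliable, even though the inequality $a(c+e)-2b^{2}\ge0$ you derive from it is correct on positivity grounds alone.
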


The zeros of $\widetilde{\Delta}^{e}_{\lambda\mu}$ in the interval $(-\infty, \mathcal{E}_{min}(0))$ are investigated in the following lemma.

\begin{lemma}\label{shuh12}
Let $(\mu,\lambda)\in\mathbb{R}^2.$

\begin{itemize}
\item[(a)] If $2\mu +\lambda+\frac{\lambda\mu}{{1+\gamma}}\ge0$ and $\mu>-(1+\gamma),$ then the function $\widetilde{\Delta}^{e}_{\lambda\mu}(\cdot)$ has no zeros in the interval $(-\infty,\mathcal{E}_{\min}(0)).$

\item[(b)] If $2\mu +\lambda+\frac{\lambda\mu}{{1+\gamma}}<0$ or $2\mu +\lambda+\frac{\lambda\mu}{{1+\gamma}}=0$ with $\mu<-(1+\gamma),$  then the function $\widetilde{\Delta}^{e}_{\lambda\mu}(\cdot)$ has a unique zero in the interval $(-\infty,\mathcal{E}_{\min}(0)).$

\item[(c)] If $2\mu +\lambda+\frac{\lambda\mu}{{1+\gamma}}>0$ and $\mu<-(1+\gamma),$ then the function $\widetilde{\Delta}^{e}_{\lambda\mu}(\cdot)$ has two zeros in the interval $(-\infty,\mathcal{E}_{\min}(0)).$
\end{itemize}

\end{lemma}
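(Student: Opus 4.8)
The plan is to mirror the proof of Lemma \ref{shuh11}, with the two thresholds interchanged, by studying the real-analytic function $F(z):=\widetilde{\Delta}^{e}_{\lambda\mu}(z)$ on the half-line $I^{-}:=(-\infty,\mathcal{E}_{\min}(0))$. First I would record the two endpoint limits. Since $a(z),b(z),c(z),e(z)\to 0$ as $z\to-\infty$ (dominated convergence, the integrands being bounded by $1/|z|$), one gets $\lim_{z\to-\infty}F(z)=1>0$. At the upper end, Lemma \ref{shuh6}(i) gives $F(z)=-\tfrac{1}{2\pi(1+\gamma)}S^{-}(\lambda,\mu;\gamma)\ln(\mathcal{E}_{\min}(0)-z)+B^{-}(\lambda,\mu;\gamma)+o(1)$ as $z\nearrow\mathcal{E}_{\min}(0)$, so that $F(z)\to+\infty$, $\to B^{-}$, or $\to-\infty$ according to whether $S^{-}(\lambda,\mu;\gamma)=2\mu+\lambda+\tfrac{\lambda\mu}{1+\gamma}$ is positive, zero, or negative. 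This already fixes the parity of the number of zeros in each regime and explains why $S^{-}$ is the quantity appearing in the statement.

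The key structural input I would establish is the strict positivity of the bilinear term. With the inner product $\langle u,v\rangle_{z}:=\tfrac{1}{(2\pi)^{2}}\int_{\mathbb{T}^2}\tfrac{u(p)\overline{v(p)}}{\mathcal{E}_0(p)-z}\,\mathrm{d} p$, which is positive definite for $z\in I^{-}$ since $\mathcal{E}_0(p)-z>0$ there, one has $a=\langle 1,1\rangle_{z}$, $b=\langle 1,\cos p_i\rangle_{z}$ and, writing $\psi=\cos p_1+\cos p_2$, $c+e=\langle \cos p_i,\psi\rangle_{z}=\tfrac12\langle\psi,\psi\rangle_{z}$ and $2b=\langle 1,\psi\rangle_{z}$. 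Hence $g(z):=a(z)(c(z)+e(z))-2b(z)^{2}=\tfrac12\big(\langle 1,1\rangle_{z}\langle\psi,\psi\rangle_{z}-\langle 1,\psi\rangle_{z}^{2}\big)$ is half the Gram determinant of the linearly independent pair $\{1,\psi\}$, so $g(z)>0$ on $I^{-}$ by Cauchy--Schwarz. Consequently $F=\big(1+\mu(c+e)\big)+\lambda\big(a+\mu g\big)=1+\lambda a+\mu(c+e)+\lambda\mu\, g$, the $\lambda\mu$-coefficient $g$ has fixed positive sign, and the same representation applied to $a',b',c',e'$ (governed by the positive-definite weight $(\mathcal{E}_0-z)^{-2}$) supplies the monotonicity used to control $F'$.

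With these ingredients the count is finished case by case. In case (a), where $S^{-}\ge 0$ and $\mu>-(1+\gamma)$, the endpoint values are $1$ and $+\infty$ (or $B^{-}$ when $S^{-}=0$); if $\lambda,\mu\ge0$ then $F=1+\lambda a+\mu(c+e)+\lambda\mu g\ge1>0$ outright, and in the mixed-sign subcases the shape control rules out an interior dip. Here the threshold $\mu=-(1+\gamma)$ enters through the sign of $B^{-}$ on the critical line $S^{-}=0$: substituting $\lambda=-2\mu(1+\gamma)/\big((1+\gamma)+\mu\big)$ into $B^{-}$ reduces it to $N/\big(\pi^{2}(1+\gamma)(\,(1+\gamma)+\mu)\big)$ with $N=\pi^{2}(1+\gamma)^{2}+D\mu^{2}$ and $D>0$ for $1+\gamma>1$, so $N>0$ and hence $B^{-}>0\iff\mu>-(1+\gamma)$. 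In case (b) ($S^{-}<0$, or $S^{-}=0$ with $\mu<-(1+\gamma)$) the function runs from $1>0$ to $-\infty$ (respectively to $B^{-}<0$), so the intermediate value theorem together with unimodality yields exactly one zero. In case (c) ($S^{-}>0$ and $\mu<-(1+\gamma)$) one checks that necessarily $\lambda<0$ and $\mu<0$, so each factor $1+\lambda a$ and $1+\mu(c+e)$ is strictly decreasing from $1$ to $-\infty$ and crosses zero once; their product starts at $1$, dips negative between those crossings, and returns to $+\infty$, while $-2\lambda\mu b^{2}<0$ only deepens the dip, producing exactly two zeros.

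The main obstacle is precisely the last shape statement: endpoint data only constrain the \emph{parity} of the zero count, so the crux is to prove that $F$ is unimodal on $I^{-}$, i.e.\ that $F'$ vanishes at most once, equivalently that every sublevel set $\{F\le 0\}$ is an interval; this is what distinguishes the genuine two-zero case from spurious extra oscillations and upgrades the intermediate-value counts to exact ones. I expect to obtain it by differentiating $F=1+\lambda a+\mu(c+e)+\lambda\mu g$ and using the positive-definiteness of the weighted inner products together with the logarithmic expansions of Lemma \ref{shuh5} to bound the number of sign changes of $F'$ in each sign-regime of $(\lambda,\mu)$; the value $\mu=-(1+\gamma)$ will reappear here as the boundary separating a monotone $F$ from one possessing an interior minimum.
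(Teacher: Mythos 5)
Your skeleton matches what the paper intends, and most of your auxiliary work is correct. Note first that the paper itself offers no argument here beyond the one-line remark that Lemmas \ref{shuh11} and \ref{shuh12} ``can be proved in a similar manner to lemma 4.6 in \cite{ham34}'', so the comparison is really with that deferred argument. Your ingredients are the right ones: the limits $\widetilde{\Delta}^{e}_{\lambda\mu}(z)\to 1$ as $z\to-\infty$ and $\widetilde{\Delta}^{e}_{\lambda\mu}(z)\to+\infty$, $B^{-}$, or $-\infty$ as $z\nearrow\mathcal{E}_{\min}(0)$ according to the sign of $S^{-}$ (Lemma \ref{shuh6}); the Cauchy--Schwarz positivity of $g:=a(c+e)-2b^{2}$ as half a Gram determinant of the independent pair $\{1,\cos p_1+\cos p_2\}$, giving the useful representation $\widetilde{\Delta}^{e}_{\lambda\mu}=1+\lambda a+\mu(c+e)+\lambda\mu g$; and the computation that on the critical line $S^{-}=0$ one has $B^{-}>0$ if and only if $\mu>-(1+\gamma)$, which correctly explains where the threshold $\mu=-(1+\gamma)$ comes from (I checked this sign conclusion independently; on $S^{-}=0$ the constant term simplifies and its sign is that of $(1+\gamma)+\mu$). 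Your lower bound in case (c) is also rigorous: $\lambda<0$, $\mu<0$ follow from the hypotheses, and at the unique zero of $1+\lambda a$ one has $\widetilde{\Delta}^{e}_{\lambda\mu}=-2\lambda\mu b^{2}<0$, so the intermediate value theorem yields at least two zeros.

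The genuine gap is the one you flag yourself: every \emph{exactness} claim in the lemma rests on the ``shape control''/unimodality of $\widetilde{\Delta}^{e}_{\lambda\mu}$ on $(-\infty,\mathcal{E}_{\min}(0))$, and you never prove it --- you only say you ``expect to obtain it'' by differentiating. Endpoint data fix parity only: without the shape statement, the mixed-sign subcases of (a) (where your outright bound $\widetilde{\Delta}^{e}_{\lambda\mu}\ge 1$ fails) could a priori have two zeros, case (b) three, and case (c) four. Nor is the step routine from your stated tools: writing $(\widetilde{\Delta}^{e}_{\lambda\mu})'=\lambda a'+\mu(c+e)'+\lambda\mu g'$ with $a',(c+e)',g'>0$, the three terms have competing signs precisely in the regimes at issue (in case (c), $\lambda a'+\mu(c+e)'<0$ while $\lambda\mu g'>0$), so positive definiteness of the weighted inner products alone cannot bound the number of sign changes of the derivative; and pairs of zeros can in principle be created or destroyed through tangencies, which no parity or endpoint argument detects. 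This missing exactness step is exactly what the citation to \cite{ham34}, Lemma 4.6, is meant to discharge --- there the count is made exact by an explicit analysis of the determinant (e.g.\ monotonicity arguments made available by solving the zero equation for a coupling constant as a function of $z$, rather than a soft unimodality claim), and it is the mathematical heart of the lemma. As written, your proposal establishes only ``at least one'' in (b), ``at least two'' in (c), and nothing in the mixed-sign part of (a).
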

The lemmas \ref{shuh11} and \ref{shuh12} can be proved in a similar manner to lemma 4.6 in \cite{ham34}.

\noindent{\textit{Proof of Theorem \ref{lak4}.}}
The proof of the assertions of the theorem are based on the lemmas \ref{shuh3}, \ref{shuh7}, \ref{shuh8}, and \ref{shuh11}, as well as \ref{shuh12}.

\noindent{\textit{Proof of Theorem \ref{lak5}.}}
The proof of the assertions of the theorem are based on the lemmas \ref{shuh3}, \ref{shuh9} and \ref{shuh10}.

\subsection{The discrete spectrum of ${H}_{\lambda\mu}(K)$}

For every $n\ge1$, we define
\begin{equation*}
e_n(K;\lambda,\mu):= \sup\limits_{\phi_1,\ldots,\phi_{n-1}\in
L^{2,e}(\mathbb{T}^2)}\,\,\inf\limits_{{\psi}
\in[\phi_1,\ldots,\phi_{n-1}]^\perp,\,\|{\psi}\|=1}
({H}_{\lambda\mu}(K){\psi},{\psi})
\end{equation*}
and
\begin{equation*}
E_n(K; \lambda,\mu):= \inf\limits_{\phi_1,\ldots,\phi_{n-1}\in
L^{2,e}(\mathbb{T}^2)}\,\,\sup\limits_{{\psi}
\in[\phi_1,\ldots,\phi_{n-1}]^\perp,\,\|{\psi}\|=1}
({H}_{\lambda\mu}(K){\psi},{\psi}).
\end{equation*}
By the minimax principle, $e_n(K;\lambda,\mu)\le \mathcal{E}_{\min}(K)$ and
$E_n(K;\lambda,\mu)\ge \mathcal{E}_{\max}(K).$ Since, the rank of
$V_{\lambda\mu}$ does not exceed three, by choosing suitable elements
$\phi_1$, $\phi_2$ and $\phi_3$ from
the range of $V_{\lambda\mu}$  one concludes that
$e_n(K;\lambda,\mu) = \mathcal{E}_{\min}(K)$ and $E_n(K;\lambda,\mu) =
\mathcal{E}_{\max}(K)$ for all $n\ge4.$

\begin{lemma}\label{shuh13}
Let $n\ge1$ and $i\in\{1,2\}.$  For every fixed $K_j\in\mathbb{T},$
$j\in\{1,2\}\setminus\{i\},$  the map
$$
K_i\in\mathbb{T} \mapsto \mathcal{E}_{\min}((K_1,K_2)) - e_n((K_1,K_2);\lambda,\mu)
$$
is non-increasing in $(-\pi,0]$ and non-decreasing in $[0,\pi]$.
Similarly, for every fixed $K_j\in\mathbb{T},$ $j\in\{1,2\}\setminus\{i\},$
the map
$$
K_i\in\mathbb{T} \mapsto E_n((K_1,K_2);\lambda,\mu) - \mathcal{E}_{\max}((K_1,K_2))
$$
is non-increasing in $(-\pi,0]$ and non-decreasing in $[0,\pi]$.
\end{lemma}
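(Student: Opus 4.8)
The plan is to show that, for fixed $K_j$, both gap functions $\mathcal{E}_{\min}(K)-e_n(K;\lambda,\mu)$ and $E_n(K;\lambda,\mu)-\mathcal{E}_{\max}(K)$ depend on the variable $K_i$ only through a single scalar amplitude which is even in $K_i$ and monotone on each half–period, and that each gap is monotone in this amplitude. First I would put the dispersion into amplitude–phase form. Since $\varepsilon(p)=\sum_{i}(1-\cos p_i)$, expanding $\cos(K_i-p_i)$ yields
\[
\mathcal{E}_K(p)=2(1+\gamma)-\sum_{i=1}^2\big[(1+\gamma\cos K_i)\cos p_i+\gamma\sin K_i\sin p_i\big]
=2(1+\gamma)-\sum_{i=1}^2 A_i(K_i)\cos\!\big(p_i-\varphi_i(K_i)\big),
\]
where $A_i(K_i)=\sqrt{1+2\gamma\cos K_i+\gamma^2}$ and $\varphi_i(K_i)=\arg\!\big(1+\gamma e^{\mathrm{i}K_i}\big)$. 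Hence $\mathcal{E}_{\min}(K)=2(1+\gamma)-\sum_i A_i(K_i)$ and $\mathcal{E}_{\max}(K)=2(1+\gamma)+\sum_i A_i(K_i)$. A direct computation with $A_i^2{}'(K_i)=-2\gamma\sin K_i$ shows that $A_i(\cdot)$ is even, strictly increasing on $(-\pi,0]$ and strictly decreasing on $[0,\pi]$.

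The decisive step is to pass to the frame in which the minimum of $\mathcal{E}_K$ sits at the origin. Let $U_{\varphi}$ be the unitary shift $(U_{\varphi}\psi)(p)=\psi(p-\varphi(K))$ with $\varphi(K)=(\varphi_1(K_1),\varphi_2(K_2))$. It conjugates $H_0(K)$ into multiplication by $2(1+\gamma)-\sum_i A_i(K_i)\cos p_i$, which depends on $K$ \emph{only through the amplitudes} $A_1,A_2$. The key observation is that $U_{\varphi}$ leaves $V_{\lambda\mu}$ invariant: from \eqref{U12}--\eqref{U13} one has $V_{\lambda\mu}=\lambda\,\Pi_0+\tfrac{\mu}{2}(\Pi_1+\Pi_2)$, where $\Pi_0$, $\Pi_1$, $\Pi_2$ are the orthogonal projections onto $\mathrm{span}\{1\}$, $\mathrm{span}\{\cos p_1,\sin p_1\}$ and $\mathrm{span}\{\cos p_2,\sin p_2\}$, respectively. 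Each of these harmonic subspaces is mapped onto itself by $p_i\mapsto p_i+\varphi_i$, so $U_{\varphi}^{*}V_{\lambda\mu}U_{\varphi}=V_{\lambda\mu}$. Therefore $U_{\varphi}^{*}H_{\lambda\mu}(K)U_{\varphi}$ depends on $K$ only through $(A_1(K_1),A_2(K_2))$. Since $V_{\lambda\mu}$ has finite rank with $U_\varphi$-invariant range, the extremal test functions in the min--max may be drawn from $\mathrm{Ran}\,V_{\lambda\mu}$, which is $U_\varphi$-invariant; as $U_\varphi$ is unitary, the variational quantities $e_n$ and $E_n$ are unchanged under this conjugation.

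I would then insert this into the min--max formulas. For the lower gap, rewriting $\mathcal{E}_{\min}(K)-e_n=\inf_{\phi}\sup_{\psi}[\mathcal{E}_{\min}(K)-(H_{\lambda\mu}(K)\psi,\psi)]$ and conjugating gives
\[
\mathcal{E}_{\min}(K)-e_n(K;\lambda,\mu)=\inf_{\phi_1,\dots,\phi_{n-1}}\ \sup_{\chi\perp\phi_l,\ \|\chi\|=1}\Big[-\sum_{i=1}^2 A_i(K_i)\!\int_{\mathbb{T}^2}(1-\cos p_i)\,|\chi(p)|^2\,\mathrm{d}p-(V_{\lambda\mu}\chi,\chi)\Big],
\]
and likewise $E_n(K;\lambda,\mu)-\mathcal{E}_{\max}(K)=\inf_{\phi}\sup_{\chi}[-\sum_i A_i(K_i)\int_{\mathbb{T}^2}(1+\cos p_i)|\chi|^2\,\mathrm{d}p+(V_{\lambda\mu}\chi,\chi)]$. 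Because $1\mp\cos p_i\ge0$, in both expressions the bracket is, for every fixed $\chi$ and every choice of $\phi_l$, non-increasing in each $A_i$; hence the sup--inf is non-increasing in $A_i$. Composing with the monotonicity of $A_i(K_i)$ — decreasing on $[0,\pi]$, increasing on $(-\pi,0]$ — yields exactly that both maps are non-increasing in $(-\pi,0]$ and non-decreasing in $[0,\pi]$, as claimed.

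The main obstacle, and the reason the shift is essential, is that a naive direct comparison fails: the function $\mathcal{E}_K(p)-\mathcal{E}_{\min}(K)=\sum_i[A_i(K_i)-(1+\gamma\cos K_i)\cos p_i-\gamma\sin K_i\sin p_i]$ is \emph{not} pointwise monotone in $K_i$, since its $K_i$-derivative equals $-\gamma\sin K_i/A_i+\gamma\sin(K_i-p_i)$, which changes sign as $p$ varies. Only after moving the minimum to $p=0$ does the $K$-dependence of the kinetic part collapse into the manifestly monotone factors $A_i(K_i)(1\mp\cos p_i)$; the invariance $U_{\varphi}^{*}V_{\lambda\mu}U_{\varphi}=V_{\lambda\mu}$ is precisely what guarantees that the shift does not reintroduce any $K$-dependence through the interaction.
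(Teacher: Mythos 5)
Your proof is correct, and its skeleton is the same as the paper's: express the gaps via min--max, show the kinetic quadratic form's $K_i$-dependence is monotone on each half-period after centering, and use that the interaction does not move. However, your version is genuinely more careful and, in one respect, more general. The paper's proof rests on the single displayed identity $\bigl((H_0(K)-\mathcal{E}_{\min}(K))\psi,\psi\bigr)=(1+\gamma)\int_{\mathbb{T}^2}\sum_{i}\cos\tfrac{K_i}{2}\,(1-\cos q_i)|\psi(q)|^2\,\mathrm{d}q$, which already presupposes an (unstated) change of variables and, as written, is exact only for $\gamma=1$: for general $\gamma>0$ the correct amplitude is your $A_i(K_i)=\sqrt{1+2\gamma\cos K_i+\gamma^2}$, which equals $(1+\gamma)\cos(K_i/2)$ precisely when $\gamma=1$, and the centering shift is by $\varphi_i=\arg\bigl(1+\gamma e^{\mathrm{i}K_i}\bigr)$, not $K_i/2$. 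Since $A_i$ has the same evenness and half-period monotonicity as $(1+\gamma)\cos(K_i/2)$, the paper's conclusion stands, but your amplitude--phase reduction is what actually validates it for all mass ratios. You also make explicit the step the paper compresses into ``Since $V_{\lambda\mu}$ is independent of $K$'': after a $K$-dependent shift one must check $U_\varphi^*V_{\lambda\mu}U_\varphi=V_{\lambda\mu}$, which you correctly obtain from the rotation invariance of the spans $\{1\}$ and $\{\cos p_i,\sin p_i\}$ underlying \eqref{U12}--\eqref{U13}; your closing remark that a naive pointwise comparison of $\mathcal{E}_K(p)-\mathcal{E}_{\min}(K)$ fails is also apt, and explains why the shift is not cosmetic. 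One trim: the sentence about drawing extremal test functions from $\mathrm{Ran}\,V_{\lambda\mu}$ is unnecessary --- invariance of the min--max values under unitary conjugation, obtained by relabeling the constraint vectors $\phi_l\mapsto U_\varphi^*\phi_l$, already suffices, as the rest of your argument in fact uses.
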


\begin{proof}
Without loss of generality we can assume that $i=1.$ For given ${\psi}\in
L^{2,e}(\mathbb{T}^2)$ we consider
$$
(({H}_0(K) - \mathcal{E}_{\min}(K)){\psi},{\psi})=(1+\gamma)\int \limits_{\mathbb{T}^2}
\sum\limits_{i=1}^2 \cos\tfrac{K_i}{2}\,\big(1-\cos
q_i\big)|\psi(q)|^2\,\mathrm{d} q, \quad K:=(K_1,K_2).
$$
Clearly, the map $K_1\in\mathbb{T}\mapsto (({H}_0(K) -
\mathcal{E}_{\min}(K)){\psi},{\psi})$ is non-increasing  in $(-\pi,0]$ and is
non-decreasing in $[0,\pi].$ Since ${V}_{\lambda\mu}$ is independent
of $K,$ by definition of $e_n(K;\lambda,\mu)$ the map
$K_1\in\mathbb{T}\mapsto e_n(K;\lambda,\mu) - \mathcal{E}_{\min}(K)$ is
non-increasing in $(-\pi,0]$ and is non-decreasing   in $[0,\pi].$

The same argument holds for $K_i\mapsto E_n(K;\lambda,\mu) - \mathcal{E}_{\max}(K)$.
\end{proof}

\noindent{\textit{Proof of Theorem \ref{lak1}.}} By using Lemma
\ref{shuh13} for any $K\in\mathbb{T}^2$ and $m\ge1$ we have
\begin{equation}\label{U21}
0\le \mathcal{E}_{\min}(0) - e_m(0;\lambda,\mu) \le \mathcal{E}_{\min}(K) -
e_m(K;\lambda,\mu)
\end{equation}
and
\begin{equation*}
E_m(K;\lambda,\mu) - \mathcal{E}_{\max}(K) \ge E_m(0;\lambda,\mu) -
\mathcal{E}_{\max}(0) \ge 0.
\end{equation*}
By the assumption of the theorem \ref{lak1}, $ e_n(0;\lambda,\mu)$ is a discrete eigenvalue of
${H}_{\lambda\mu}(0)$ for some $\lambda,\mu\in\mathbb{R}.$ Therefore,
$\mathcal{E}_{\min}(0) - e_n(0;\lambda,\mu)>0,$ and hence, according to
\eqref{U21} and \eqref{U5}
$e_n(K;\lambda,\mu)$ is a discrete eigenvalue of
${H}_{\lambda\mu}(K)$ for any $K\in\mathbb{T}^2.$ Since
$e_1(K;\lambda,\mu)\le \ldots \le e_n(K;\lambda,\mu)<\mathcal{E}_{\min}(K),$
it follows that ${H}_{\lambda\mu}(K)$ has at least $n$ eigenvalues
below its essential spectrum. The same argument applies to $E_n(K;\lambda,\mu)$.\,\,\hfill $\qed$
\smallskip

\noindent{\textit{Proof of Theorem \ref{lak3}}} can be
obtained by combining the results of Theorem \ref{lak1} with Theorem
\ref{lak4} and Theorem \ref{lak5}.\,\hfill $\square$

\end{document}